\pgfplotsset{compat=newest}
\numberwithin{equation}{section}
\newcommand{\bbZp}{{\mathbb{Z}_{\geq0}}}
\newtheorem{thm}{Theorem}
\newtheorem{prop}[thm]{Proposition}
\newtheorem{lem}[thm]{Lemma}
\newenvironment{taglem}[1]
  {\taggedlem}
  {\endtaggedlem}
\newtheorem{cor}[thm]{Corollary}
\newtheorem{Prob}{Problem}
\theoremstyle{remark}
\newtheorem*{Rem}{Remark}
\newtheorem*{Rems}{Remarks}
\theoremstyle{definition}
\newtheorem{Ass}{Assumption}
\begin{document}
\title {Spectral and Dynamical contrast on highly correlated Anderson-type models.}
\author{Rodrigo Matos}
\email{matosrod@tamu.edu}
\address{Department of Mathematics, Texas A\&M University, College Station TX, 77843, USA.}
\author{Rajinder Mavi}
\author{Jeffrey Schenker}
\email{jeffrey@math.msu.edu}
\address{Department of Mathematics, Michigan State University, East Lansing MI 48823, USA.}

\maketitle
\begin{abstract}
{   We study spectral and dynamical properties of random Schr\"odinger operators  $H_{\mathrm{Vert}}=-A_{\mathbb{G}_{\mathrm{Vert}}}+V_{\omega}$ and $H_{\mathrm{Diag}}=-A_{\mathbb{G}_{\mathrm{Diag}}}+V_{\omega}$ on certain two dimensional graphs ${\mathbb{G}_{\mathrm{Vert}}}$ and ${\mathbb{G}_{\mathrm{Diag}}}$. Differently from the standard Anderson model, the random potentials are not independent but, instead, are constant along any vertical line, i.e $V_{\omega}(\bm{n})=\omega(n_1)$, for $\bm{n}=(n_1,n_2)$. In particular, the potentials studied here exhibit long range correlations. We present examples where geometric changes to the underlying graph, combined with high disorder, have a significant impact on the spectral and dynamical properties of the operators, leading to contrasting behaviors for the ``diagonal" and ``vertical" models.
 Moreover, the ``vertical" model exhibits a sharp phase transition within its (purely) absolutely continuous spectrum. This is captured by the notions of transient and recurrent components of the absolutely continuous spectrum, introduced by Avron and Simon in \cite{A-S}.}

\end{abstract}

\section{Introduction and Main results}
In this paper, we present and analyze examples of random Schr\"odinger operators for which contrasting dynamical and spectral behaviors can be observed. In comparison to the well established theory of Anderson localization, discussed below in detail, the systems studied here exhibit some form of long range correlations. Depending on the geometry of the underlying graph, the dynamical and spectral properties of the models can change significantly. Indeed, the first of the models described below, which we call the vertical model, exhibits purely absolutely continuous spectrum and a ballistic lower bound for the time averaged second moments of the position operator. Furthermore, its absolutely continuous spectrum splits into a transient and a recurrent component, in the sense of Avron and Simon \cite{A-S}. The transient spectrum for the vertical model is shown to appear only at the spectral edges and, for small values of a vertical hopping parameter, is much smaller (in the sense of Lebesgue measure) than the recurrent component. The notions of transient and recurrent absolutely continuous spectrum will be reviewed below in section \ref{transrecdef}.
On the other hand, the second model presented here, referred to as the diagonal model, exhibits dynamical localization and has pure point spectrum.

The nature of transport can be markedly different for strongly correlated potentials from what is familiar from the weakly correlated context. For instance, in \cite{M-S-J} two of us considered a system consisting of a particle in a random potential and a spin-$1/2$ which can flip only when the particle visits the origin.  This model can be viewed as an Anderson model on two lines connected at the origin, viewing the up and down spin states as distinct horizontal layers:
\vspace{0.5cm}
\begin{center}
\begin{tikzpicture}[scale=1.0]
\draw[fill=black] (-1,1) circle (3pt);
\draw[fill=black] (-1,2) circle (3pt);

\draw[fill=black] (0,1) circle (3pt);
\draw[fill=black] (0,2) circle (3pt);

\draw[fill=black] (1,1) circle (3pt);
\draw[fill=black] (1,2) circle (3pt);

\draw[fill=black] (2,1) circle (3pt);
\draw[fill=black] (2,2) circle (3pt);

\draw[fill=black] (3,1) circle (3pt);
\draw[fill=black] (3,2) circle (3pt);

\draw[fill=black] (4,1) circle (3pt);
\draw[fill=black] (4,2) circle (3pt);

\draw[fill=black] (5,1) circle (3pt);
\draw[fill=black] (5,2) circle (3pt);



(3,1)--(3,2);
\draw[thick]
(2,1)--(2,2);
 \draw[thick]
 (-1,2)--(0,2) --(1,2) -- (2,2) -- (3,2) --(4,2) -- (5,2);
 \draw[thick]
 (-1,1)--(0,1) --(1,1) -- (2,1) -- (3,1) --(4,1) -- (5,1);
\end{tikzpicture}
\end{center}
\vspace{0.5cm}
In this geometric picture, the potential is identical on the two layers, and thus has long range correlations in the graph metric.
In \cite{M-S-J} it was shown that resonant tunneling is compatible with correlated pure point spectrum since the model exhibits Green's function decay in the graph metric, has pure point spectrum, but its eigenfunctions are only localized in the particle position \cite[Theorems II.2
 and II.5]{M-S-J}. The present paper explores the consequences of correlations along the lines of those considered in \cite{M-S-J}, but of a longer range nature.  The key observation is that the \emph{geometry} of the hopping matters a great deal to dynamics in the presence of long range correlations.
 
\subsection{Overview of the models} \label{overview}
We now present a brief description of the graphs and random operators studied in this note. A detailed description is given below in section \ref{dfnsection}. 

\begin{figure}
\begin{center}
\begin{tikzpicture}[scale=0.5]
\def\width{12};
\def\height{4};
\foreach \m in {0,...,\width}
   \foreach \n in {0,...,\height}
        \draw[fill=black] (\m,\n) circle (3pt);
\foreach \n in {0,...,\height}
    \draw[fill=black] (\number\width +2,\n) circle (3pt);
\foreach \m in {0,...,\width}
    \draw[fill=black] (\m,\number\height +2) circle (3pt);
\draw[fill=black] (\number\width+2,\number\height +2) circle (3pt);
\foreach \n in {0,...,\numexpr\height -1}
     \draw[thick] (0,\n) -- (0,\n + 1);
\foreach \m in {0,...,\numexpr\width -1}
     \foreach \n in {0,...,\height}
         \draw[thick] (\m,\n) -- (\m+1,\n);
\foreach \m in {0,...,\numexpr\width-1}
    \draw[thick] (\m,\number\height +2) -- (\m+1,\number\height +2);
\foreach \n in {0,...,\height} 
     \draw[thick,dotted] (\width,\n) -- (\number\width+2,\n) -- (\number\width+4,\n);
\draw[thick,dotted] (\number\width,\number\height+2) -- (\number\width+2,\number\height + 2) -- (\number\width+4,\number\height + 2) ;
\draw[thick,dotted] (0,\height) -- (0,\number\height +2)-- (0,\number\height + 4);
\node at (-2,{0.5*(\number\height +4)}) {$\mathbb{G}_{\mathrm{Vert}}$\,\,\,\,\,\,};
\node at (0,-0.5){\fontsize{6}{9}\selectfont $(0,0)$};
\node at (-1, \number\height+2) {\fontsize{6}{9}\selectfont$(0,n)$};
\node at (\number\width+2,-0.5) {\fontsize{6}{9}\selectfont $(m,0)$};
\node at (\number\width+2,\number\height+2-0.5) {\fontsize{6}{9}\selectfont $(m,n)$};
\end{tikzpicture}
\end{center}
\caption{The graph $\mathbb{G}_{\mathrm{Vert}}$ }
\label{figure 1}
\end{figure}

Let $\mathbb{Z}_{\geq 0}=\{0,1,2,\ldots\}$. The graph $\mathbb{G}_{\mathrm{Vert}}$ has vertex set equal to $\mathbb{Z}_{\geq 0}\times \mathbb{Z}_{\geq 0}$ with nearest neighbor connections which are either horizontal or vertical (in the $y$-axis only); see figure \ref{figure 1}.  Contrasting to $\mathbb{G}_{\mathrm{Vert}}$ is the following family $\{\mathbb{G}_{\mathrm{Diag},\ell}\}^{\infty}_{\ell=0}$ of ``diagonal" graphs indexed by an integer $\ell\ge 0$. For $\ell=0$, let $\mathbb{G}_{\mathrm{Diag},0}$ denote the graph whose the vertex set lies on or below the diagonal of the first quadrant with nearest neighbors connected horizontally or through the diagonal $\{(n,n): n\in \mathbb{Z}_{\geq 0}\}$ (see Figure \ref{figure 2}).  For $\ell\ge 1$, the graph $\mathbb{G}_{\mathrm{Diag},\ell}$ is an ``interpolation" between $\mathbb{G}_{\mathrm{Vert}}$ and $\mathbb{G}_{\mathrm{Diag},\ell=0}$ obtained by alternating $\ell$ vertical connections among different layers with one ``diagonal" connection (see Figure \ref{figure 2} for the cases $\ell=1$ and $\ell=2$). Somewhat more precisely, the graph $\mathbb{G}_{\mathrm{Diag},\ell}$ has as its vertex set the portion of $\mathbb{Z}_{\geq 0}\times \mathbb{Z}_{\geq 0}$ on or to the right of the path $\mathcal{D}=\cup_{n\in \mathbb{Z}_{\geq 0}} \{\left(n,n(\ell+1)+r\right) : r=0,1,\ldots,\ell\}$, with vertices connected horizontally or along $\mathcal{D}$. To simplify notation, we often suppress the parameter $\ell$ in the discussion below, writing $\mathbb{G}_{\mathrm{Diag}}$ for $\mathbb{G}_{\mathrm{Diag},\ell}$ with the understanding that we are considering an arbitrary but fixed value of $\ell$.  A more detailed description of these graphs can be found in section \ref{dfnsection}.

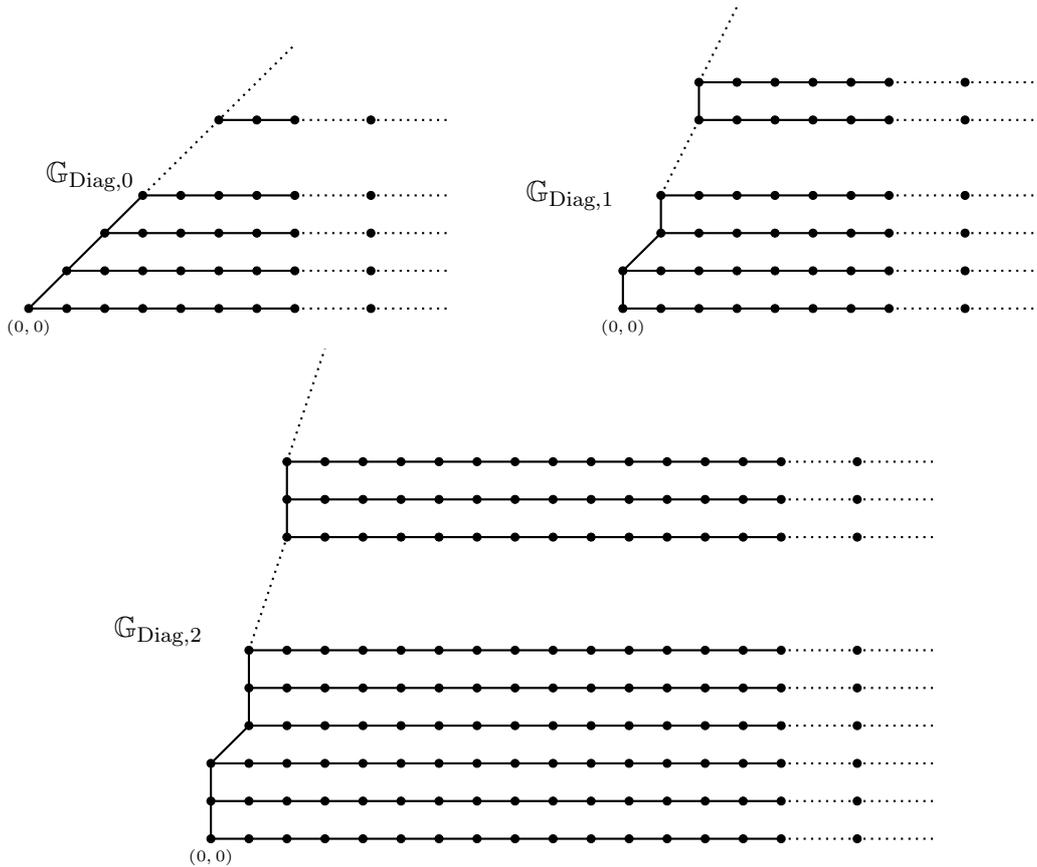
\begin{figure}
\begin{center}
\begin{tikzpicture}[scale=0.5]
    \def\width{7};
    \def\height{3};
    \foreach \n in {0,...,\height }
        \foreach \m in {\n,...,\width }
        \draw[fill=black] (\m,\n) circle (3pt);
    \foreach \m in {\numexpr\height+2,...,\width }
        \draw[fill=black] (\m,\numexpr\height+2) circle (3pt);
    \foreach \n in {0,...,\height }
        \draw[fill=black] (\numexpr\width+2,\n) circle (3pt);
    \draw[fill=black] (\numexpr\width +2, \numexpr\height +2) circle (3pt);
    \foreach \n in {0,...,\numexpr\height -1}
        \draw[thick] (\n,\n) -- (\n + 1 , \n + 1);
    \foreach \n in {0,...,\height}
        \foreach \m in {\n,...,\numexpr\width -1}
            \draw[thick](\m,\n)--(\m+1,\n);
    \foreach \m in {\numexpr\height+2,...,\numexpr\width-1}
        \draw[thick](\m,\numexpr\height+2) -- (\m+1,\numexpr\height+2);
    \foreach \n in {0,...,\numexpr\height}
        \draw[dotted,thick] (\numexpr\width,\n)--(\numexpr\width+2,\n) -- (\numexpr\width+4,\n);
    \draw[dotted,thick] (\width,\numexpr\height+2) -- (\width+2,\numexpr\height+2) -- (\width+4,\numexpr\height+2);
    \draw[dotted,thick] (\height,\height)--(\numexpr\height+2,\numexpr\height+2) -- (\numexpr\height+4,\numexpr\height+4);
    \node at ({0.5*(\number\height +1)},{0.5*(\number\height +4)}) {$\mathbb{G}_{\mathrm{Diag},0}$\,\,\,\,\,\,};
    \node at (0,-0.5) {\fontsize{6}{9}\selectfont $(0,0)$};
\end{tikzpicture}
\qquad
\begin{tikzpicture}[scale=0.5]
    \def\width{7};
    \def\height{1};
    \def\l{1};
    \def\fullheight{\the\numexpr((\height+1)*(\l+1)-1)\relax};
    \foreach \n in {0,...,\height }
        \foreach \k in {0,...,\l}
            \foreach \m in {\n,...,\width }
                \draw[fill=black] (\m,{((\l+1)*\n +\k)}) circle (3pt);
    \foreach \m in {\numexpr\height+1,...,\width }
        \foreach \k in {0,...,\l}
                \draw[fill=black] (\m,\numexpr\fullheight+\numexpr\l+1+\numexpr\k) circle (3pt);
    \foreach \n in {0,...,\numexpr\fullheight }
        \draw[fill=black] (\numexpr\width+2,\n) circle (3pt);
    \foreach \k in {0,...,\l}
        \draw[fill=black] (\numexpr\width +2, \numexpr\fullheight +\numexpr\l+1+\numexpr\k) circle (3pt);
    \foreach \n in {0,...,\numexpr\height }
        \foreach \k in {0,...,\numexpr\l-1}
            \draw[thick] (\n,{(\l+1)*\n+\k}) -- (\n ,  {(\l+1)*\n+\k+1});
    \foreach \n in {0,...,\numexpr\height -1 }
        \draw[thick](\n,{(\l+1)*(\n+1)-1)}) -- (\n +1 ,{(\l+1)*(\n+1)});
    \foreach \n in {0,...,\height }
        \foreach \k in {0,...,\l}
            \foreach \m in {\n,...,\numexpr\width -1}
            \draw[thick](\m,{((\l+1)*\n +\k)})--(\m +1,{((\l+1)*\n +\k)});
    \foreach \m in {\numexpr\height+1,...,\numexpr\width -1 }
        \foreach \k in {0,...,\l}
            \draw[thick] (\m,\numexpr\fullheight+\numexpr\l+1+\numexpr\k) -- (\m+1,\numexpr\fullheight+\numexpr\l+1+\numexpr\k);
    \foreach \n in {0,...,\height }
        \foreach \k in {0,...,\l}
            \draw[dotted,thick] (\numexpr\width,{(\l+1)*\n+\k})--(\numexpr\width+2,{(\l+1)*\n+\k}) -- (\numexpr\width+4,{(\l+1)*\n+\k});
     \foreach \k in {0,...,\numexpr\l-1}
            \draw[thick] (\height+1,\numexpr\fullheight+\numexpr\l+1+\numexpr\k) -- (\height+1 , \numexpr\fullheight+\numexpr\l+1+\numexpr\k +1);
    \draw[dotted,thick] (\height,\fullheight) -- (\height+1,\fullheight+\l+1);
    \draw[dotted,thick] (\height+1,\fullheight+\l+\l+1) -- (\height+2,\fullheight+\l+\l +\l+2);
     \foreach \k in {0,...,\l}
            \draw[dotted,thick] (\numexpr\width,\numexpr\fullheight+\numexpr\l+1+\numexpr\k)--(\numexpr\width+2,\numexpr\fullheight+\numexpr\l+1+\numexpr\k) -- (\numexpr\width+4,\numexpr\fullheight+\numexpr\l+1+\numexpr\k);
    \node at ({{\height-2}},{0.5*(\number\fullheight+3*\number\l)}) {$\mathbb{G}_{\mathrm{Diag},1}$\,\,\,\,\,\,};
    \node at (0,-0.5) {\fontsize{6}{9}\selectfont$(0,0)$};
\end{tikzpicture}
\qquad
\begin{tikzpicture}[scale=0.5] 
    \def\width{15};
    \def\height{1};
    \def\l{2};
    \def\fullheight{\the\numexpr((\height+1)*(\l+1)-1)\relax};
    \foreach \n in {0,...,\height }
        \foreach \k in {0,...,\l}
            \foreach \m in {\n,...,\width }
                \draw[fill=black] (\m,{((\l+1)*\n +\k)}) circle (3pt);
    \foreach \m in {\numexpr\height+1,...,\width }
        \foreach \k in {0,...,\l}
                \draw[fill=black] (\m,\numexpr\fullheight+\numexpr\l+1+\numexpr\k) circle (3pt);
    \foreach \n in {0,...,\numexpr\fullheight }
        \draw[fill=black] (\numexpr\width+2,\n) circle (3pt);
    \foreach \k in {0,...,\l}
        \draw[fill=black] (\numexpr\width +2, \numexpr\fullheight +\numexpr\l+1+\numexpr\k) circle (3pt);
    \foreach \n in {0,...,\numexpr\height }
        \foreach \k in {0,...,\numexpr\l-1}
            \draw[thick] (\n,{(\l+1)*\n+\k}) -- (\n ,  {(\l+1)*\n+\k+1});
    \foreach \n in {0,...,\numexpr\height -1 }
        \draw[thick](\n,{(\l+1)*(\n+1)-1)}) -- (\n +1 ,{(\l+1)*(\n+1)});
    \foreach \n in {0,...,\height }
        \foreach \k in {0,...,\l}
            \foreach \m in {\n,...,\numexpr\width -1}
            \draw[thick](\m,{((\l+1)*\n +\k)})--(\m +1,{((\l+1)*\n +\k)});
    \foreach \m in {\numexpr\height+1,...,\numexpr\width -1 }
        \foreach \k in {0,...,\l}
            \draw[thick] (\m,\numexpr\fullheight+\numexpr\l+1+\numexpr\k) -- (\m+1,\numexpr\fullheight+\numexpr\l+1+\numexpr\k);
    \foreach \n in {0,...,\height }
        \foreach \k in {0,...,\l}
            \draw[dotted,thick] (\numexpr\width,{(\l+1)*\n+\k})--(\numexpr\width+2,{(\l+1)*\n+\k}) -- (\numexpr\width+4,{(\l+1)*\n+\k});
     \foreach \k in {0,...,\numexpr\l-1}
            \draw[thick] (\height+1,\numexpr\fullheight+\numexpr\l+1+\numexpr\k) -- (\height+1 , \numexpr\fullheight+\numexpr\l+1+\numexpr\k +1);
    \draw[dotted,thick] (\height,\fullheight) -- (\height+1,\fullheight+\l+1);
    \draw[dotted,thick] (\height+1,\fullheight+\l+\l+1) -- (\height+2,\fullheight+\l+\l +\l+2);
     \foreach \k in {0,...,\l}
            \draw[dotted,thick] (\numexpr\width,\numexpr\fullheight+\numexpr\l+1+\numexpr\k)--(\numexpr\width+2,\numexpr\fullheight+\numexpr\l+1+\numexpr\k) -- (\numexpr\width+4,\numexpr\fullheight+\numexpr\l+1+\numexpr\k);
    \node at ({{\height-2}},{0.5*(\number\fullheight+3*\number\l)}) {$\mathbb{G}_{\mathrm{Diag},2}$\,\,\,\,\,\,};
    \node at (0,-0.5) {\fontsize{6}{9}\selectfont $(0,0)$};
\end{tikzpicture}
\caption{The graphs $\mathbb{G}_{\mathrm{Diag}}$   with $\ell=0$, $\ell=1$ and $\ell=2$.}
\label{figure 2}
\end{center}
\end{figure}

We now describe the operators of interest for this work, 
$$H_{{\mathrm{Vert}}}\ =\ -A_{\mathrm{Vert},\gamma} +V_{\omega} \quad \text{and} \quad 
H_{{\mathrm{Diag}}} \ =\ -A_{\mathrm{Diag},\gamma} +V_{\omega} \ . $$ Here
$A_{\sharp,\gamma}$, for $\sharp = \mathrm{Vert}$ or $\mathrm{Diag},\ell$,  denotes the weighted adjacency operator for the graph $\mathbb{G}_{\sharp}$, with hopping equal to $1$ along horizontal edges and to $\gamma>0$ along vertical edges ($\sharp=\mathrm{Vert}$) or diagonal edges ($\sharp=\mathrm{Diag}$).  We take the hopping equal to $1$ along vertical edges of $\mathbb{G}_{\mathrm{Diag}}$, although analogous results could be obtained for other values.  The operator $V_\omega$ is a  ``random potential'' of the form $\left(V_\omega\psi\right)(\bm{n})=V_\omega(\bm{n})\psi(\bm{n})$ with $V_\omega(\bm{n})$ random variables indexed by $\bm{n}\in \mathbb{G}_{\sharp}$.  The random potentials we consider depend only on the horizontal coordinate, and thus are perfectly correlated in the vertical direction. More precisely, we assume that
\begin{equation}\label{eq:Vomega} V_\omega(\bm{n}) \ = \ \omega(n_1) \quad \text{for} \quad \bm{n}=(n_1,n_2)\in \mathbb{G}_{\sharp} \ . \end{equation}

Our main assumption is the following
\begin{Ass}\label{ass:1} The random variables $\{\omega(n_1)\}_{n_1\in \mathbb{Z}_{\geq 0}}$ are non-negative, independent and identically distributed, with a common density $\rho$.  Furthermore, $\rho(v)>0$ for almost every $v\in [0,\omega_{\mathrm{max}}]$ and $\rho(v)$ vanishes for $v< 0$ and $v> \omega_{\mathrm{max}}$.  \end{Ass}
\begin{Rem} Note that $0\le \omega(n)\le \omega_{\mathrm{max}}$ almost surely and $\mathrm{Prob}[v-\epsilon<\omega(n)<v+\epsilon]>0$ for every $0\le v \le \omega_{\mathrm{max}}$ and $\epsilon >0$.
\end{Rem}

The restriction of  $H_{\sharp}$, $\sharp=\mathrm{Vert}$ or $\mathrm{Diag}$,  to a single layer $\mathbb{Z}_{\ge m_0} \times\{n_0\}$ of $\mathbb{G}_{\sharp}$ is a copy of the Anderson model on the half-line $\mathbb{Z}_{\ge m_0}=\{m\in \mathbb{Z} \ : \ m\ge m_0\}$:
\begin{equation}\label{eq:AM}
{\mathcal{h}}_{\mathrm{And}}^{(m_0)} \psi(m) \ = \ - \psi(m+1) - I[m\ge m_0 +1] \psi(m-1) + \omega(m) \psi(m) \ \quad m\ge m_0 \ .
\end{equation} 
The spectrum of this Anderson model is $\sigma(\mathcal{h}_{\mathrm{And}}^{(m_0)})=[-2,2+\omega_{\mathrm{max}}]$ almost surely (see, for example, \cite[Corollary 3.13]{A-W-B}). Furthermore, with probability one, $\mathcal{h}_{\mathrm{And}}^{(m_0)}$ exhibits Anderson localization and exponential dynamical localization (these concepts are reviewed below in \S\ref{sec:AL}). The operator $H_\sharp$ has \emph{identical} samples of the disorder on each layer and, in general, has some spectrum induced by the diagonal/vertical hopping that falls outside the interval $[-2,2+\omega_{\mathrm{max}}]$. As will become apparent from our main results, there are fundamental differences in the spectral and dynamical properties of the operators obtained by connecting these horizontal components in distinct ways.
\begin{Rem} The assumption that $\rho(v)>0$ for almost every $v\in [0,\omega_{\mathrm{max}}]$ is purely for convenience, as it allows us to identify the spectrum of the Anderson model with a single interval.  Most of what we do below would carry over to the more general case, even to unbounded potentials, with some modifications to Theorem \ref{transrecthm} in case there are additional boundaries in the spectrum. 
\end{Rem}

\subsection{Dynamical contrast between the vertical and diagonal models}
For $q>0$, the time-averaged $q$-moments
of a self-adjoint operator $H$ on $\ell^2(\mathbb{G}_\sharp)$ are defined by
\begin{equation}M^{q}_{T}(H,X_j)\ := \ \frac{2}{T}\int^{\infty}_{0}e^{\frac{-2t}{T}}\mathbb{E}\langle \delta_{\bm{0}},e^{itH}|X_j|^q e^{-itH}\delta_{\bm{0}}\rangle\,dt \ , \label{eq:Mqxj}
\end{equation}
 where $|X_j|$, $j=1,2$, acts as a multiplication operator on $\ell^{2}\left(\mathbb{G}_\sharp\right)$ via
 $\left(|X_j|^{q}\phi\right)(\bm{n}):=|n_j|^{q}\phi(\bm{n})$ for $\bm{n}=(n_1,n_2)\in \mathbb{G}_\sharp$. We also introduce 
 \begin{equation} M^{q}_{T}(H)\ := \ \frac{2}{T}\int^{\infty}_{0}e^{\frac{-2t}{T}}\mathbb{E}\langle \delta_{\bm{0}} ,e^{itH}|\bm{X}|^q e^{-itH}\delta_{\bm{0}}\rangle\,dt \label{eq:Mq}
\end{equation}
where $|\bm{X}|=|X_1|+|X_2|$ (note that, up to $q$-dependent constants, $M^{q}_T(H) \ \lesseqgtr  \ M^q_T(H,X_1)+M^q_T(X_2)$).

 Our first result concerns $H_{{\mathrm{Vert}}}$, where a combination of the symmetry and localization in the horizontal direction  induces ballistic transport in the vertical direction. 
\begin{thm}\label{thmsym} For all $\gamma>0$ there is $T_0>0$ such that the averaged moments satisfy
\begin{equation}\label{eqsym}
M^{q}_{T}\left(H_{\mathrm{Vert}},X_2\right)\geq C_{0}T^q
\end{equation}
for all times $T\geq T_0$ and some positive constant $C_{0}$ which depends on $\|\rho\|_{\infty}$ and $\gamma$.
\end{thm}
Perspectives and open problems related to Theorem \ref{thmsym} are discussed in  \S\ref{ballisticsubsec}. The proof of Theorem \ref{thmsym}, which appears in \S\ref{proofofthmsym} below, is based on the following outline:
\begin{enumerate}
    \item We show that $H_{\mathrm{Vert}}$ has purely absolutely continuous spectrum. This follows from the vertical structure of the graph $\mathbb{G}_{\mathrm{Vert}}$ along with the fact that the random potential depends only on the first coordinate.
    \item The Guarnieri bound \cite{Guarnieribound}, specialized to the two dimensional case, implies that $M^{q}_{T}\left(H_{\mathrm{Vert}} \right ) \geq C_{0}\left(T\right)^{\frac{q}{2}}.$ This is a general fact which does not rely on the randomness at all, only on the absolute continuity of the spectral measure $\mu_{\bm{0}}$ for $H_{\mathrm{Vert}}$ associated to $\delta_{\bm{0}}$.
    \item In Lemma \ref{horizontalprojbound} below, we show that $H_{\mathrm{Vert}}$ exhibits exponential dynamical localization in the horizontal direction, from which it follows that $M^{q}_{T}\left(H_{\mathrm{Vert}},\abs{X_1}\right) $ is bounded as $T\rightarrow \infty$. Therefore, it is possible to improve upon the Guarneri bound for $M^{q}_{T}\left(H_{\mathrm{Vert}}\right) $. 
    The intuitive idea is that transport may only occur in the vertical direction (since the horizontal direction essentially consists of an one-dimensional Anderson model).  Thus one should obtain a result consistent with the Guarnieri bound in one dimension, namely $M^{q}_{T}\left(H_{\mathrm{Vert}}\right) \approx M^{q}_{T}\left(H_{\mathrm{Vert}},\abs{X_2} \right)\geq C_{0}T ^{q}.$
\end{enumerate}
The above arguments are implemented through several technical steps in sections \ref{acsection}, \ref{floquetsection} and \ref{proofofthmsym}. We emphasize that theorem \ref{thmsym} is valid for all $\gamma>0$. 
We keep the vertical hopping parameter here for consistency, since it plays an important role in our results for $H_{\mathrm{Diag}}$.

Our second result concerns $H_{\mathrm{Diag}}=- A_{\mathrm{Diag},\gamma}+V_{\omega}$, which we show exhibits a strong form of dynamical localization in the horizontal direction:
\begin{thm}\label{thmdiag}
For each $\ell\in \mathbb{Z}_{\geq0}$ there exist $\gamma_0>0$ and $T_0>0$ such that whenever $\gamma<\gamma_0$, we have
\begin{equation}
    \mathbb{E} \left ( \sup_{t\in \mathbb{R}} \left | \langle \delta_{\bm{n}} \ , e^{-it H_{\mathrm{Diag}}} \delta_{\bm{m}} \rangle \right | \right ) \ \le \ 
    C_1 e^{-\nu |m_1-n_1|} \ ,
\end{equation}
with $C_1<\infty$ and $\nu >0$ depending on $\ell$ and $\gamma$.
\end{thm}
In this diagonal model, transport in the vertical direction is constrained by horizontal transport.  As a consequence we have a bound on all position moments:
\begin{cor}\label{Cor:bddmomentsdiag}
When $\gamma <\gamma_0$, with $\gamma_0$ as in Theorem \ref{thmdiag}, we have 
$
 \sup_T M^{q}_{T}\left(H_{\mathrm{Diag}} \right ) \ < \ \infty 
$
 for all $q$.
\end{cor}

Perspectives and open problems related to the above results are discussed in section \S\ref{ballisticsubsec}.
The main elements of the proof of Theorem \ref{thmdiag} are the following
\begin{enumerate}
    \item The model $H_{\mathrm{Diag}}$ exhibits, for small $\gamma$, exponential decay for the fractional moments of the  Green's function; see lemma \ref{Lemma:decay0n} below. This follows from an argument similar to the one in \cite[Theorem 6.3]{A-W-B} adapted to the present context. The main ingredient is Feenberg's loop-erased expansion for the Green's function (\cite[Theorem 6.2]{A-W-B}) combined with the geometry of the graph $\mathbb{G}_{\mathrm{Diag},\ell}$. In particular, it is crucial that for $n_2=k_2(\ell+1)+r_2$, $r_2\in\{0,\ldots,\ell\}$ and $n_2'>(k_2+1)(\ell+1)$, \emph{the restriction of $H_{\mathrm{Diag}}$ to $\ell^{2}\left(\mathbb{G}_{\mathrm{Diag},\ell}\cap(\mathbb{Z}_{\geq0}\times\{n_2'\})\right)$ is independent of $\omega(k_2)$, while the restriction of this operator to $\ell^{2}\left(\mathbb{G}_{\mathrm{Diag},\ell}\cap(\mathbb{Z}_{\geq0}\times\{n_2\})\right)$ depends on this variable.}
    \item Once decay of fractional moments of the Green's function is known, one expects to find upper bounds on the quantum dynamics. In the present context, we obtain exponential dynamical localization by a proof similar to that used in the context of continuum random Schr\"odinger operators  \cite[Theorem A1]{A-S-F-H}. 
    \end{enumerate}
The details of the above outline are completed in sections \ref{Greendecaysec} and \ref{proofthmdiagsec}.
\subsection{Spectral contrast between $H_{\mathrm{Vert}}$ and $H_{\mathrm{Diag}}$}
The following is a simple consequence of Theorem \ref{thmdiag} and the RAGE theorem. 
\begin{cor}\label{cormain} 
Whenever $\gamma<\gamma_0$, $H_{\mathrm{Diag}}$ has pure point spectrum with probability one.
\end{cor}

The spectral contrast between the two models is evident from the result below.

\begin{thm}\label{booleleb}
With probability one, $H_{\mathrm{Vert}}$ has simple, purely absolutely continuous spectrum.  Furthermore, $\delta_{\bm{0}}$ is a cyclic vector and the associated spectral measure $\mu_{\bm{0}}$ has a bounded density and is
supported on a set of Lebesgue measure  $4{\gamma}$.
\end{thm}
\begin{Rem}
That is, $d\mu_{\bm{0}}(E)=f(E)dE$ with $\sup_E f(E) < \infty$ and $|\{f>0\}|=4\gamma$.
\end{Rem}
A notable feature of Theorem \ref{booleleb} is that, for small values of  $\gamma$, the support of $\mu_{\bm{0}}$ has Lebesgue measure much smaller than the spectrum of $H_{\mathrm{Vert}}$, since the later contains the interval $[0,2+\omega_{\mathrm{max}}]$. While such behavior is necessary in systems exhibiting spectral localization (for which the support has measure zero), we are not aware of explicit examples of it in the context of random operators exhibiting transport and AC spectrum, as in the case of $H_{\mathrm{Vert}}$. As we shall see in the following section, this phenomenon is linked to the fact that $H_{\mathrm{Vert}}$ has \emph{recurrent AC} spectrum in $[0,2+\omega_{\mathrm{max}}]$.

The proof of Theorem \ref{booleleb} may be found in \S\ref{sec:booleleb} below. In calculating the Lebesgue measure of the support for $\mu_{\bm{0}}$, we make use of a generalization of \emph{Boole's identity} which is of independent interest. As we could not find a reference in the literature with the exact statement needed, we present the result here and give the details of the proof in the Appendix. Let $\mu$ be a finite Borel measure and let $F(z)=\int \frac{1}{u-z}\,d \mu(u) $ be its Borel transform, defined whenever $z\in \mathbb{C}^{+}$. Then the limit $$F(E+i0)=\lim_{\delta \to 0^{+}}F(E+i\delta)$$ exists and is finite for Lebesgue almost every $E$ and is furthermore real for almost every $E$ if $\mu$ is purely singular (see, e.g., \cite[Theorem 5.9.1]{Simon-Book}).
For the Borel transform of a singular measure there is a beautiful equality of Boole: 
\begin{prop}[Boole's identity \cite{Boole}]\label{boole}
Let $\mu$ be a finite, purely singular Borel measure on $\mathbb{R}$ and let $F(z)=\int \frac{1}{u-z} d\mu(u)$ be its Borel transform.  Then
 \begin{equation}\label{Booleeq}
 \big|\{E\in \mathbb{R}\,\,:F(E+i0)>t\}\big|=\frac{\mu\left(\mathbb{R}\right)}{t}.
 \end{equation}
\end{prop}
\begin{Rems}1) Here $\big|S\big|$ denotes the Lebesgue measure of $S$. 2) Boole's identity and its extensions have been
rediscovered or studied in various contexts by different authors (\cite{Loomis}, \cite{Stein}, \cite{Davis1}, \cite{Davis2}, \cite{Vinogradov}, \cite{D-J-L-S}, \cite{Poltora}). For further historical notes we refer to
\cite[Chapter 5]{Simon-Book} and \cite[Chapter 8]{A-W-B}.
\end{Rems}
\noindent To prove Theorem \ref{booleleb} we require the following generalization of Boole's identity:
\begin{prop}\label{booletype}
Let $\mu$ and $F$ be as in Proposition \ref{boole}. Then
 \begin{equation}\label{Boole's}
 \big|\{E\in \mathbb{R}\,\,:\alpha< E+F(E+i0)<\beta\}\big|=\beta-\alpha
 \end{equation}
 for every real $\alpha <\beta$.
\end{prop}

\subsection{Phase transition within $\sigma\left(H_{\mathrm{Vert}}\right)$}\label{transrecdef}
Our next result sheds light on Theorem \ref{booleleb}, providing further information on the dynamics $e^{-itH_{\mathrm{Vert}}}$ by describing the splitting of  the spectrum of $H_{\mathrm{Vert}}$ into transient and recurrent components, in the sense of Avron and Simon \cite{A-S}. This may be interpreted as a phase transition within the purely absolutely continuous spectrum of $H_{\mathrm{Vert}}$.

Before stating this result, it is useful to recall the notions of transient and recurrent AC spectrum. A key observation of \cite{A-S} was that the absolutely continuous subspace $\mathcal{H}_{\mathrm{ac}}$ of a self-adjoint operator $H$ can be further decomposed into its transient and recurrent subspaces.  The subspace ${\mathcal{H}}^{\mathrm{tac}}$ is defined to be the closure of the set of all $\psi\in {\mathcal{H}}^{\mathrm{ac}}$ such that, for all $N\in \mathbb{N}$
 \begin{equation}
 \left|\langle\psi,e^{-itH}\psi\rangle\right|=O\left(t^{-N}\right);
 \end{equation}
 such vectors are called ``transient vectors". By the Riemann-Lebesgue lemma, for any $\psi\in {\mathcal{H}}^{\mathrm{ac}}$ we have that $\lim_{t\to \infty}|\langle\psi , e^{-itH}\psi\rangle| = 0$.  For a transient vector $\psi\in {\mathcal{H}}^{\mathrm{tac}}$, we require the limit to converge faster than any inverse power of $t$. As a result, the Radon-Nikodym derivative $f_\psi(E)=\frac{d\mu_{\psi}}{dE}$ of the spectral measure associated to $\psi$ is a $C^\infty$ function (see \cite[Proposition 3.1]{A-S}).  The recurrent AC subspace $\mathcal{H}^{\mathrm{rac}}$ is defined to be the orthogonal complement of the transient space ${\mathcal{H}}^{\mathrm{tac}}$ within the AC subspace: $\mathcal{H}^{\mathrm{rac}} = \mathcal{H}^{\mathrm{ac}} \ominus \mathcal{H}^{\mathrm{tac}}$.  As explained in \cite{A-S}, one of their motivations is that in case $\mu_\psi=\chi_{C}\,dx$, where $C$ is a Cantor-like set of positive Lebesgue measure, the measure $\mu_\psi$ resembles a singular measure, despite its absolute continuity; and indeed this is a typical situation in which $\psi$ belongs to the recurrent subspace ${\mathcal{H}}^{\mathrm{rac}}$.

The transient and recurrent AC subspaces associated to a self-adjoint operator $H$ are seen to be invariant subspaces for $H$ \cite[Theorem 3.4]{A-S}. The transient and recurrent AC spectra of $H$, denoted $\sigma^{\mathrm{tac}}(H)$ and $\sigma^{\mathrm{rac}}(H)$ respectively, are the spectra of the restriction of $H$ to the corresponding subspaces, $\mathcal{H}^{\mathrm{tac}}$ and $\mathcal{H}^{\mathrm{rac}}$.

\begin{thm}\label{transrecthm}
For all $\gamma>0$ we have 
\begin{enumerate}[label=(\alph*)]
    \item $\sigma^{\mathrm{tac}}(H_{\mathrm{Vert}})$ is a non-deterministic closed subset of $[-2-2\gamma,-2] \cup [ 2+ \omega_{\mathrm{max}}, 2+\omega_{\mathrm{max}}+2\gamma].$
    \item $\sigma^{\mathrm{rac}}(H_{\mathrm{Vert}})=[-2,2+\omega_{\mathrm{max}}]$ \ .
\end{enumerate}
\end{thm}

The points $-2$ and $2+\omega_{\mathrm{max}}$ in $\sigma(H_{\mathrm{Vert}})$ are ``mobility edges'' separating two distinct types of spectra. The recurrent spectrum $\sigma^{\mathrm{rac}}(H_{\mathrm{Vert}})$ is equal to the bulk spectrum of the $1D$ Anderson model on the horizontal lines of $\mathbb{G}_{\mathrm{Vert}}$, whereas the transient spectrum $\sigma^{\mathrm{tac}}(H_{\mathrm{Vert}})$ falls outside the bulk spectrum. Further comments on the transient and recurrent subspaces of $H_{\mathrm{Vert}}$ are given in \S\ref{Sec:perspectives}. Theorem \ref{transrecthm} follows from Corollary \ref{ractacsplit} in \S\ref{Sec:transient} below.

\subsection{Organization of the paper} The remainder of this paper is organized as follows: \S\ref{sec:AL} consists of a brief review of Anderson localization, \S\ref{dfnsection} includes the precise definitions of the graphs $\mathbb{G}_{\mathrm{Vert}}$ and $\mathbb{G}_{\mathrm{Diag}}$, further perspectives and open problems are given in \S\ref{Sec:perspectives}. The proofs of results for $H_{\mathrm{Vert}}$ (Theorems \ref{thmsym}, \ref{booleleb} and \ref{transrecthm}) are given in \S\ref{Sec:proofofthmsym}.  The proof of Theorem \ref{thmdiag} (dynamical localization for $H_{\mathrm{Diag}}$) is given in \S\ref{sec:Hdiagproofs}. A proof of Proposition \ref{booletype}, a generalization of Boole's lemma, is given in  Appendix \ref{sec:Booleproof}.  In further appendices, we derive horizontal localization for $H_{\mathrm{Vert}}$ and review the harmonic analysis leading to boundedness of fractional moments of the Green's functions for $H_{\mathrm{Vert}}$ and $H_{\mathrm{Diag}}$.

\section{A short review of Anderson localization}\label{sec:AL}
We now discuss the relevant background on Anderson localization, as many of the specific results and different notions of localization will play a key role in the subsequent analysis. 

The effects of disorder on transport properties of quantum systems have drawn a significant amount of attention in the mathematics and physics communities since their introduction in 1958 in the celebrated paper \cite{A} by the physicist Anderson. The efforts to encode Anderson's claim that randomness localizes waves in disordered media into a rigorous mathematical statement and to obtain a
proof for it gave rise to a beautiful theory. 
For a more complete historical picture we refer to the survey
\cite{Stolz} and the book \cite{A-W-B}.

In the present paper we make extensive use of known bounds for the $1D$ Anderson model ${\mathcal{h}}_{\mathrm{And}}^{(m_0)}$ on the half-line $\ell^2(\mathbb{Z}_{\ge m_0})$, defined above  in eq.\ \eqref{eq:AM}.  More generally, the Anderson model may be defined on $\ell^2(\Omega)$, with $\Omega \subset \mathbb{Z}^d$, as the random operator $\mathcal{h}^{(\Omega)}=-A_\Omega + \lambda V_{\omega}$, where
\begin{enumerate}[label=(\roman*)]
\item{ $A_\Omega$ is the adjacency operator acting on $\varphi \in\ell^2\left(\Omega\right)$ through
    \begin{eqnarray*}\left(A_\Omega\varphi \right)(n)=\sum_{\substack{|m-n|_{1}=1 \\ m \in \Omega}}\varphi(m) \,,\quad \ n\in
    \Omega \, ,\quad |n|_{1}=|n_1|+\cdots+|n_d|.
\end{eqnarray*}}
\item{The random potential $V_{\omega}$ acts as a multiplication operator on $\ell^2\left(\Omega \right)$ via
$$\left(V_{\omega}\varphi\right)(n)=\omega(n)\varphi(n) \, , \quad n \in \Omega.
$$}
\item{$\omega=\{\omega(n)\}_{n\in \Omega}$ is a list of independent, identically distributed random variables.}
\item{$\lambda>0$ denotes the disorder strength.}
\end{enumerate}
Let $\{\delta_n\}_{n\in \Omega}$ be the canonical basis of
 $\ell^2\left(\Omega\right)$, with $\delta_n(m)=\delta_{mn}$, the Kronecker delta.
Dynamical localization is defined as averaged decay of the matrix elements
$|\langle\delta_n,e^{-it\mathcal{h}^{(\Omega)}}\delta_0 \rangle|
$, made explicit through a bound  such as
\begin{equation}\label{dynamicalloc}
\mathbb{E}\left(\sup_{t\in \mathbb{R}}|\langle\delta_n,e^{-it\mathcal{h}^{(\Omega)}}\delta_0 \rangle|\right)\leq Cr(n),
\end{equation}
where $C>0$ and
$\sum_{n\in \mathbb{Z}^d}r(n)<\infty.$
If the bound is obtained with $r(n)=e^{-\nu|n|}$, for some $\nu>0$, this is called \emph{exponential dynamical localization}. 
If $C_q := \sum_{n\in \mathbb{Z}^d}|n|^{q}r^2(n)<\infty,$  then
dynamical localization in the sense of (\ref{dynamicalloc}) implies the bound 
\begin{equation*}\mathbb{E}\left(\sup_{t\in \mathbb{R}}\,\langle \delta_0,e^{itH_{\omega}}|X|^q e^{-itH_{\omega}}\delta_0\rangle \right)\ \le\  C_q \ <\ \infty ,
    \end{equation*}
which in turn shows a bound on the disorder and time averaged moment (see eq. \eqref{eq:Mq})
\begin{equation}\label{finitemoments}
\mathbb{E}(M^{q}_{T}(H))\ \leq \ C_q \ < \ \infty \ .
\end{equation}
The inequality (\ref{finitemoments}) is a signature of localization whereas its counterpart, $M^{q}_{T}(H) \geq  CT^{\alpha} $
 for $\alpha>0$, indicates non-trivial transport which is called ballistic when $\alpha=q$ and diffusive in case $\alpha=q/2$.

There is a close relationship between dynamical localization, as in \eqref{dynamicalloc}, and decay of matrix elements of the Green's function
\begin{equation}\label{Greendfn} G^{(\Omega)}(n,m;z)=\langle \delta_n, (\mathcal{h}^{(\Omega)} -z)^{-1} \delta_m \rangle \end{equation} as $|n-m|\rightarrow \infty.$
For random potentials of the type considered here, with variables having an absolutely continuous distribution with a bounded density, a convenient signature of exponential localization is given by exponential decay of the \emph{fractional moments of the Green's function}, namely 
\begin{equation}\label{eq:fmdecay}
\sup_{E} \limsup_{\epsilon \rightarrow 0}\mathbb{E}\left ( \left |G^{(\Omega)}(n,m;E+i\epsilon)|^s \right | \right ) \ \le \ C_{\mathrm{And}} e^{-\mu_{\mathrm{And}}|n-m|} \ ,
\end{equation}
with $0<s <1$, $\mu_{\mathrm{And}} >0$ and $C_{\mathrm{And}}<\infty$. 
See \cite[Chapter 7]{A-W-B} for a full discussion of the relation between fractional moments and dynamical localization. The key fact for the purposes of the present paper is that:
\begin{quote}\emph{Eqs.\ \eqref{dynamicalloc} and \eqref{eq:fmdecay} hold for the one dimensional Anderson model $\mathcal{h}_{\mathrm{And}}^{(m_0)}$ on the half line $\mathbb{Z}_{\ge m_0}$ with non-constant random variables satisfying Assumption \ref{ass:1}.} \end{quote}
See, e.g., \cite[Chapter 12]{A-W-B} for further details.

More generally, in the one dimensional setting, exponential dynamical localization has been shown for any $\lambda>0$ whenever the support of the random variables $\{\omega(n)\}_{n\in \mathbb{Z}}$ contains at least two points. This is the result of many efforts, starting with \cite{K-S}; see also \cite{GMP} for the analysis of a related one-dimensional model. For singular distributions, complete spectral localization in one dimension was first showed in \cite{C-K-M}, and the recent works \cite{Dam} and \cite{J-Z} establish complete exponential dynamical localization. In dimension $d\geq2$, exponential dynamical localization has been proved at large disorder, meaning that $\lambda$ is taken sufficiently large, or at weak disorder at the edges of spectral bands, see \cite[Theorems 10.2 and 10.4 ]{A-W-B} for precise statements.

Finally, we recall the notion of \emph{spectral localization}. Associated to any self-adjoint operator $H$ on  a Hilbert space  $\mathcal{H}$, there is a
 decomposition $\mathcal{H}={\mathcal{H}}^{\mathrm{pp}}\oplus {\mathcal{H}}^{\mathrm{sc}}\oplus {\mathcal{H}}^{\mathrm{ac}}$ 
 into the pure point, singular continuous, and absolutely continuous sub-spaces, such that the spectral measure $\mu_\psi$ associated to a vector $\psi\in \mathcal{H}_\sharp$ is of the corresponding type (pure point for $\sharp=$pp, singular continuous for $\sharp=$sc, etc.).  The RAGE theorem (after Ruelle, Amrein, Georgescu and Enss; see \cite[Theorem 2.6]{A-W-B}) provides dynamical characterizations for these subspaces.  One of its consequences is that dynamical localization as in eq.\ (\ref{dynamicalloc}) implies that $\mathcal{h}^{(\Omega)}$ has pure point spectrum, meaning that $\mathcal{H}={\mathcal{H}}^{\mathrm{pp}}$ and the spectrum $\sigma(\mathcal{h}^{(\Omega)})$ is the closure of the set of eigenvalues for $\mathcal{h}^{(\Omega)}$.  When $\mathcal{h}^{(\Omega)}$ has pure point spectrum, we say that $\mathcal{h}^{(\Omega)}$ exhibits \emph{spectral localization.} If the associated eigenfunctions decay exponentially, the operator $\mathcal{h}^{(\Omega)}$ is said to exhibit \emph{exponential localization}. Neither spectral localization nor exponential localization implies dynamical localization in general; see, e.g., \cite{delrio1995,jitomirskaya2003}. We say that $\mathcal{h}^{(\Omega)}$ exhibits exponential decay of eigenfunction correlators when
 \begin{equation}
 \mathbb{E}\left (\sup_{|g|\leq 1} |\langle \delta_n,g(\mathcal{h}^{(\Omega)})\delta_0\rangle|\right)\ \leq \ Ce^{-\mu|n|} \
 \end{equation}
 holds for positive constants $C$ and $\mu$, where the above supremum is taken over all Borel measurable functions $g:\mathbb{R}\rightarrow \mathbb{C}$ bounded by one. Exponential decay of eigenfunction correlators follows from fractional moment localization \eqref{eq:fmdecay} and implies exponential dynamical localization and exponential localization. For a proof of these facts and more detailed statements we refer to \cite[Theorems 7.2 and 7.4]{A-W-B}.
 
\section{Definition of the Models}\label{dfnsection}
 We now proceed to define the graphs of interest for this work, starting with $\mathbb{G}_{\mathrm{Vert}}$ \textemdash \ see Figure \ref{figure 1} above. The vertex set of $\mathbb{G}_{\mathrm{Vert}}$ is given by 
\begin{equation}
{\mathcal{V}}_{\mathrm{Vert}}=\mathbb{Z}_{\geq 0}\times \mathbb{Z}_{\geq 0}
\end{equation}
where $\mathbb{Z}_{\geq 0}=\mathbb{N}\cup\{0\}.$
Given $\bm{m}=(m_1,m_2)$ and $\bm{n}=(n_1,n_2)$ in ${\mathcal{V}}_{\mathrm{Vert}}$, we write $\bm{m}\sim \bm{n}$ whenever $\bm{m}$ and $\bm{n}$ are connected by an edge. The edge set of $\mathbb{G}_{\mathrm{Vert}}$ is then given by $\bm{m}\sim \bm{n}$ such that either $\{m_2=n_2\,\, \mathrm{and}\,\, |m_1-n_1|=1\}$ or $\{m_1=n_1=0 \,\,\mathrm{and}\,\, |m_2-n_2|=1\},$ with $\bm{m},\bm{n}\in \mathbb{G}_{\mathrm{Vert}}$. 
Thus, the adjacency operator of $\mathbb{G}_{\mathrm{Vert}}$ is $X_{\mathrm{Vert}}+Y_{\mathrm{Vert}}$ with
\begin{equation}
    X_{\mathrm{Vert}}(\bm{m},\bm{n})=\left\{
\begin{array}{l}
 1\,\,\, \mathrm{if}\,\, m_2=n_2,\,\, |m_1-n_1|=1 \,\,\mathrm{and}\,\, \bm{m},\bm{n}\in {\mathcal{V}}_{\mathrm{Vert}}. \\ \\
0\,\,\, \mathrm{otherwise}.\\
\end{array}\right.
\end{equation}
and \begin{equation}
    Y_{\mathrm{Vert}}(\bm{m},\bm{n})=\left\{
\begin{array}{l}
 1\,\,\, \mathrm{if}\,\, m_1=n_1=0, \,\, |m_2-n_2|=1\,\,\mathrm{and}\,\, \bm{m},\bm{n}\in {\mathcal{V}}_{\mathrm{Vert}}.  \\ \\
0\,\,\, \mathrm{otherwise}.\\
\end{array}\right.
\end{equation}
We are interested in a weighted adjacency operator, namely $A_{\mathrm{Vert},\gamma}=X_{\mathrm{Vert}}+\gamma Y_{\mathrm{Vert}}.$
More explicitly,
\begin{equation}\label{AS}
A_{\mathrm{Vert},\gamma}(\bm{m},\bm{n})=\left\{
\begin{array}{l}
 \gamma\,\,\, \mathrm{if}\,\, \,\, m_1=n_1=0, \,\, |m_2-n_2|=1\,\,\mathrm{and}\,\, \bm{m},\bm{n}\in {\mathcal{V}}_{\mathrm{Vert}}.\\
1\,\,\, \mathrm{if}\,\, m_2=n_2,\,\, |m_1-n_1|=1 \,\,\mathrm{and}\,\, \bm{m},\bm{n}\in {\mathcal{V}}_{\mathrm{Vert}}.\\
0\,\,\, \mathrm{otherwise}.\\
\end{array}\right.
\end{equation}

We turn now to the graphs $\mathbb{G}_{\mathrm{Diag},\ell}$ for $\ell\in \mathbb{Z}_{\geq 0}$ \textemdash \ see Figure \ref{figure 2} above. For each $\ell\in \mathbb{Z}_{\geq 0}$, $\mathbb{G}_{\mathrm{Diag},\ell}$ is defined as follows. Its vertex set is 
\begin{equation}
    \mathcal{V}_{\mathrm{Diag},\ell} \ = \ \bigcup_{k\in \mathbb{Z}_{\ge 0}} \mathcal{V}_{\mathrm{Diag},\ell}^{(k)}
\end{equation}
with 
\begin{equation}
    \mathcal{V}_{\mathrm{Diag},\ell}^{(k)} \ = \ \left \{ (m_1,m_2)\in \mathbb{Z}_{\geq 0}\times\mathbb{Z}_{\geq 0}  \ \middle | \ m_1\ge k \ \text{ and } k(\ell+1) \le m_2 < (k+1)\ell+1 \right \}
\end{equation}
Two vertices $\bm{m}=(m_1,m_2)$ and $\bm{n}=(n_1,n_2)$ in $\mathbb{G}_{\mathrm{Diag}}$ are adjacent, $\bm{m}\sim \bm{n}$, if $(\bm{m},\bm{n})$ belongs any of the following three sets, which represent horizontal, vertical and ``diagonal" connections, respectively: 
\begin{equation*}
\mathcal{E}^{(1)}_{\mathrm{Diag}}=\left \{(\bm{m},\bm{n})  \ \middle |\ m_2=n_2\, , \,\,  |m_1-n_1|=1\,\, \mathrm{and}\,\, \bm{m},\bm{n}\in \mathcal{V}_{\mathrm{Diag},\ell} \right \} \ ,
\end{equation*}
\begin{equation*}
\mathcal{E}^{(2)}_{\mathrm{Diag}}=\left \{(\bm{m},\bm{n})  \ \middle | \ m_1=n_1=k\,\,\, |n_2-m_2|=1 \,\, \mathrm{and}\,\, \bm{m},\bm{n}\in \mathcal{V}_{\mathrm{Diag},\ell}^{(k)} \text{ for some } k\ge 0 \right \} \,\,
\end{equation*}
and
\begin{equation*}
\mathcal{E}^{(3)}_{\mathrm{Diag}}=\left \{ (\bm{m},\bm{n}) \, , \, (\bm{n},\bm{m}) \ \middle | \ \bm{n}=\bm{m}+(1,1)\ \text{with} \ \bm{m}=(k,k(\ell+1)) \text{ for some } k\ge 0 \right \} \ .
\end{equation*}
In Figure \ref{figure 3}, this decomposition is illustrated for $\mathbb{G}_{\mathrm{Diag,2}}$, with the connections of types $\mathcal{E}^{(1)}_{\mathrm{Diag}}$, $\mathcal{E}^{(2)}_{\mathrm{Diag}}$ and $\mathcal{E}^{(3)}_{\mathrm{Diag}}$ colored in black, red and blue, respectively.
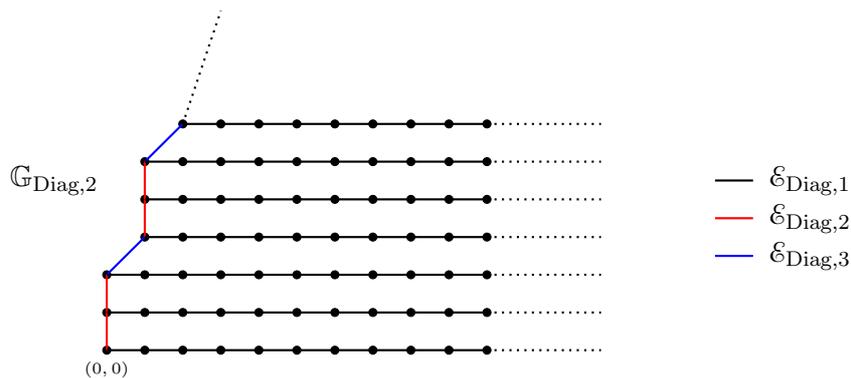
\begin{figure}
\begin{center}
\begin{tikzpicture}[scale=0.5]
 \def\width{10};
\def\height{1};
\def\l{2};
\def\fullheight{\the\numexpr((\height+1)*(\l+1)-1)\relax};
\foreach \n in {0,...,\height }
    \foreach \k in {0,...,\l}
        \foreach \m in {\n,...,\width }
            \draw[fill=black] (\m,{((\l+1)*\n +\k)}) circle (3pt);
\foreach \m in {\numexpr\height+1,...,\width }
    \draw[fill=black](\m,\fullheight+1) circle (3pt);
\foreach \n in {0,...,\numexpr\height }
    \foreach \k in {0,...,\numexpr\l-1}
        \draw[thick,red] (\n,{(\l+1)*\n+\k}) -- (\n ,  {(\l+1)*\n+\k+1});
\foreach \n in {0,...,\numexpr\height }
    \draw[thick,blue](\n,{(\l+1)*(\n+1)-1)}) -- (\n +1 ,{(\l+1)*(\n+1)});
\foreach \n in {0,...,\height }
    \foreach \k in {0,...,\l}
        \foreach \m in {\n,...,\numexpr\width -1}
        \draw[thick](\m,{((\l+1)*\n +\k)})--(\m +1,{((\l+1)*\n +\k)});
\foreach \m in {\numexpr\height+1,...,\numexpr\width - 1 }
    \draw[thick](\m,\fullheight+1) -- (\m +1,\fullheight+1);
\foreach \n in {0,...,\numexpr\fullheight +1 }
        \draw[dotted,thick](\numexpr\width,\n)--(\numexpr\width+3,\n);
\draw[dotted,thick] (\height+1,\fullheight+1) -- (\height+2,\fullheight+4);
\node at ({{\height-2}},{0.5*(\number\fullheight+4)}) {$\mathbb{G}_{\mathrm{Diag},2}$\,\,\,\,\,\,};
\node at (0,-0.5) {\fontsize{6}{9}\selectfont $(0,0)$};

\draw[thick] (\width+6,\number\fullheight -.5) -- (\width +7,\number\fullheight-.5);
\node at (\width + 7+1.5,\number\fullheight-.5)  {$\mathcal{E}_{\mathrm{Diag},1}$};
\draw[thick,red] (\width+6,\number\fullheight-1.5) -- (\width +7,\number\fullheight-1.5);
\node at (\width + 7+1.5,\number\fullheight-1.5)  {$\mathcal{E}_{\mathrm{Diag},2}$};
\draw[thick,blue] (\width+6,\number\fullheight-2.5) -- (\width +7,\number\fullheight-2.5);
\node at (\width + 7+1.5,\number\fullheight-2.5)  {$\mathcal{E}_{\mathrm{Diag},3}$};
\end{tikzpicture}
\end{center}
\caption{Types of edges in $\mathbb{G}_{\mathrm{Diag},\ell}$ with $\ell=2$.}
\label{figure 3}
\end{figure}
The adjacency operator of $\mathbb{G}_{\mathrm{Diag},\ell}$ is then $X_{\mathrm{Diag}}+Y_{\mathrm{Diag}}+D_{\mathrm{Diag}}$, with
\begin{equation}
    X_{\mathrm{Diag}}(\bm{m},\bm{n})=\left\{
\begin{array}{l}
 1\,\,\, \mathrm{if}\,\,  (\bm{m},\bm{n})\in \mathcal{E}_{\mathrm{Diag},1} \\ \\
0\,\,\, \mathrm{otherwise} ,\\
\end{array}\right.
\end{equation}

\begin{equation}
    Y_{\mathrm{Diag}}(\bm{m},\bm{n})=\left\{
\begin{array}{l}
 1\,\,\, \mathrm{if}\,\,(\bm{m},\bm{n})\in \mathcal{E}_{\mathrm{Diag},2}  \\ \\
0\,\,\, \mathrm{otherwise}, \\
\end{array}\right.
\end{equation}
and
\begin{equation}
    D_{\mathrm{Diag}}(\bm{m},\bm{n})=\left\{
\begin{array}{l}
 1\,\,\, \mathrm{if}\,\,(\bm{m},\bm{n})\in \mathcal{E}_{\mathrm{Diag},3}  \\ \\
0\,\,\, \mathrm{otherwise}.\\
\end{array}\right.
\end{equation}
We shall study a weighted version of this, namely $A_{\mathrm{Diag},\gamma}=X_{\mathrm{Diag}}+Y_{\mathrm{Diag}}+\gamma D_{\mathrm{Diag}}$.

\section{Perspectives and Open Problems}\label{Sec:perspectives}

 \subsection{On the ballistic bound of Theorem \ref{thmsym}}
 \label{ballisticsubsec}
The notion of ballistic transport employed here means that $M^q_{T}(H) \sim T^q$ with $M^q_{T}(H)$ as in \eqref{eq:Mq}. Note that this requires averaging over time and disorder. Such double averaging is important here, as our methods rely heavily on the Guarnieri bound \cite{Guarnieribound}, which requires time averaging, and on localization bounds in the horizontal direction which rely on disorder averaging. See sections \ref{floquetsection} and \ref{proofofthmsym} for further details. It is an interesting question whether ballistic bounds hold  without time averaging. 

We now mention a number of prior results on ballistic transport for various Schr\"odinger operators. A general ballistic upper bound, without time averaging, holds for discrete operators with finite range or exponentially bounded hopping terms, see, e.g., \cite[Appendix B]{A-W-Ballistic} for a proof. This bound corresponds to the single-particle version of the more general Lieb-Robinson bound \cite{L-R}.  In the context of random operators on a tree with independent single-site potentials, Aizenman and Warzel showed that absolutely continuous spectrum implies ballistic transport for time averaged moments, see \cite{A-W-Ballistic}. A ballistic upper bound for operators of the form $H=-\Delta+V$ on $L^2\left(\mathbb{R}^n\right)$, where $V$ is relatively bounded with respect to $\Delta$ with relative bound less than one, was obtained in \cite{R-S}.  Finally, the work \cite{K-L-S-S} establishes ballistic transport for certain limit periodic and quasi-periodic potentials in two dimensions.

 \subsection{ On the localization bound of Theorem \ref{thmdiag}}
As explained in the introduction, our proof of Theorem \ref{thmdiag} connects bounds on time-averaged moments to exponential decay of the Green's function fractional moments, see sections \ref{Greendecaysec} and \ref{proofthmdiagsec}. For independent potentials (more generally, potentials with a bounded conditional single-site distribution), decay of the Green's function fractional moments implies dynamical localization, see, for instance, \cite[Theorem A1]{A-S-F-H}. This perspective also allows to show dynamical localization in certain ``weakly" interacting systems as the ones considered by two of us in \cite{Mt-S}.
 
An alternative to fractional moments, the multiscale analysis technique, usually relies on the assumption of
\emph{independence at distance}, meaning that there exists a $R>0$ such that events based on boxes $\Lambda_{L_1}(m)$ are independent of
events based on boxes $\Lambda_{L_2}(n)$ if $\mathrm{dist}\left(\Lambda_{L_1}(m),\Lambda_{L_2}(n)\right)>R$. Here $\Lambda_{L}(m)=\{m'\in
\mathbb{Z}^d\,\,:|m-m'|_{\infty}<\frac{L}{2}\}.$
This assumption is not fulfilled in strongly correlated systems.

 \subsection{On the surface states of $H_{\mathrm{Vert}}$}
 The states $\psi \in \mathcal{H}^{\mathrm{tac}}$ are \emph{surface modes}, exponentially localized near the line $\{n_1=0\}$.
Such states are analogous to surface modes found in other disordered models \cite{JMdimtwo,J-M,JMMM,Liu} (though of a different dynamical character).  By way of contrast, the states in $\mathcal{H}^{\mathrm{rac}}$ are bulk states whose propagation, intuitively speaking, can be conceived of as resonant tunneling between states of the $1D$ Anderson model on the horizontal strips of the graph $\mathbb{G}_{\mathrm{Vert}}$, enabled by virtual transitions to the edge.

Both the surface and bulk modes can be formally described through separation of variables as generalized eigenfunctions of the form
$$\psi(n_1,n_2) \ = \ \sin (p(n_2+1)) \varphi(n_1) \  $$
where $p\in [0,2\pi)$ and  $\varphi=({h}^{(0)}_{\mathrm{And}}-E)^{-1}\delta_0$ with $E$ the eigenvalue and  ${h}^{(0)}_{\mathrm{And}}$ as in \eqref{eq:AM}.  For $\psi$ to satisfy the eigenfunction equation, $p$ and $E$ must be related by
\begin{equation}\label{eq:dispersion} -2\gamma \cos(p) \ = \ \Sigma(E)  \ , \end{equation}
where $\Sigma(E)$ denotes the Weyl function of the one-dimensional Anderson model on the half-line, i.e, $\Sigma(E) \ := \ -\frac{1}{\langle \delta_0, (\mathcal{h}^{(0)}_{\mathrm{And}} -E -i 0)^{-1} \delta_0 \rangle }$.

Outside of $\sigma(\mathcal{h}_{\mathrm{And}})=[-2,2+\omega_{\mathrm{max}}]$, the Weyl function $\Sigma(E)$ is a smooth, monotonic function of $E$.  There are two smooth maps $p\mapsto E_\pm(p)$ satisfying \eqref{eq:dispersion}, with ranges 
$$J_- \ = \ \left \{ E \le -2 \ : \ \abs{\Sigma(E)} \le 2\gamma \right \} \ \quad \text{and} \quad J_- \ = \ \left \{ E \ge 2+\omega_{\mathrm{max}} \ : \ \abs{\Sigma(E) }\le 2\gamma \right \} \ , $$
respectively.  The transient spectrum of $H_{\mathrm{Vert}}$ is $\sigma_{\mathrm{tac}}=J_- \cup J_+$.  The sets $J_\pm$ are non-deterministic, and for small $\gamma$ one or both may be empty.  The maps $E_\pm(p)$ give dispersion relations for the edge states, which decay exponentially away from $\{n_1=0\}$ by the Combes-Thomas bound (see \cite[Theorem 10.5]{A-W-B}). 

By way of contrast, in the spectrum of $\mathcal{h}_{\mathrm{And}}$ there is no smooth map $p\mapsto E(p)$.  Instead, for each $p$ there is a countable set $S_p$ of energies, dense in $\sigma(\mathcal{h}_{\mathrm{And}})$, at which \eqref{eq:dispersion} holds.  There is no meaningful dispersion relation for these states, since the set $S_p$ varies non-smoothly with $p$. As we show below, the set $\{E\in [-2,2+\omega_{\mathrm{max}}] \ : \ \text{a solution to \eqref{eq:dispersion} exists for some $p$}\} $ is a dense set of positive Lebesgue measure, whose complement is also dense and of positive measure in $[-2,2+\omega_{\mathrm{max}}]$.  However, with probability one, the resulting states still decay exponentially into the bulk due to the localization of the Anderson model Green's function.    

\subsection{Open Questions}
We end this section with some open questions. As a starting point, one may wonder whether the result of Theorem \ref{thmsym} can be improved to show the existence of the limit $\lim_{T\to +\infty}\frac{M^q_{T}\left(H_{\mathrm{Vert}}\right)}{T^q}$ for each $q>0$. More generally, we pose the following question:
\begin{Prob} As $t\to \infty$, does $\frac{1}{t}e^{itH_{\mathrm{Vert}}}X e^{-itH_{\mathrm{Vert}}}$ converge, in the strong sense, to an operator acting in $\ell^2\left(\mathbb{G}_{\mathrm{Vert}}\right)$? 
 \end{Prob}
 
 Existence of the limit $V=\lim_t\frac{1}{t}e^{itH_{\mathrm{Vert}}}X e^{-itH_{\mathrm{Vert}}} $ is called \emph{strong ballistic transport} and is known to occur for limit periodic Schr\"odinger operators, e.g., see \cite{damanik2015}.  The resulting limit $V$ plays the role of a velocity operator, which would typically be related to the derivative $E'(p)$ of the dispersion relation.  For this reason we expect a negative answer to the above question, but it is not obvious how to prove that the limit does not exist.

  It is natural to ask for generalizations of Theorem \ref{thmdiag} on various graphs that extend $\mathbb{G}_{\mathrm{Diag}}$.  Due to the increased number of vertices, these models can be significantly more correlated than the ones covered by Theorem \ref{thmdiag}. For example, one may consider  ``quarter-spaces", for which the underlying graph contains all vertices in $\mathbb{Z}_{\geq 0}\times \mathbb{Z}_{\geq 0}$, see figure 4 below. 
  More precisely, let $\mathbb{G}_{\mathrm{QS}}=\left(\mathcal{V}_{\mathrm{QS}},\mathcal{E}_{\mathrm{QS}}\right)$ where $\mathcal{V}_{\mathrm{QS}}=\mathbb{Z}_{\geq 0}\times \mathbb{Z}_{\geq 0}$ and $(\bm{m},\bm{n})\in \mathcal{E}_{\mathrm{QS}} $ when $\bm{m}$ and $\bm{n}$ are related by one of the following conditions:
$\mathcal{E}^{1}_{\mathrm{QS}}=\{\bm{m}=\bm{n}\pm(1,0), \, \bm{m}, \bm{n}\in \mathcal{V}_{\mathrm{QS}}\}$ or $\mathcal{E}^2_{\mathrm{QS}}=\{\bm{m}=\bm{n}\pm(1,1),\,\bm{m}, \bm{n}\in \mathcal{V}_{\mathrm{QS}}\}$. Let $H_{\mathrm{QS}}=- A_1 - \gamma A_2+V_{\omega}$, with $A_{1,2}$  the adjacency operators of the edge sets $\mathcal{E}^{1,2}_{\mathrm{QS}}$, respectively, and  $V_{\omega}$ as in \eqref{eq:Vomega}.
 \begin{Prob} Is there a value $\gamma_0>0$ for which $\gamma <\gamma_0$ implies at least one of the following?
\begin{enumerate}[label=(\alph*)]
\item  $\sigma\left(H_{\mathrm{QS}}\right)$ is pure point.
\item  $\sup_T M^{q}_{T}\left(H_{\mathrm{QS}} \right ) \ < \ \infty $ with $M^{q}_{T}\left(H_{\mathrm{QS}} \right )$ defined as in \eqref{eq:Mq}.
    \item  \begin{equation}
    \mathbb{E} \left ( \sup_{t\in \mathbb{R}} \left | \langle \delta_{\bm{n}} \ , e^{-itH_{\mathrm{QS}}} \delta_{\bm{m}}\rangle \right | \right ) \ \le \ 
    C_1 e^{-\nu |\bm{m}-\bm{n}|} \ ,
\end{equation} for positive constants $C$ and $\nu$?
\end{enumerate}
 \end{Prob}
 
 In a similar way, one can define ``half-space'' and ``full-space'' versions of $H_{\mathrm{Diag}}$.  For all of these extensions, the proof of Green's function decay given below fails due to the more extensive correlations of the potential.

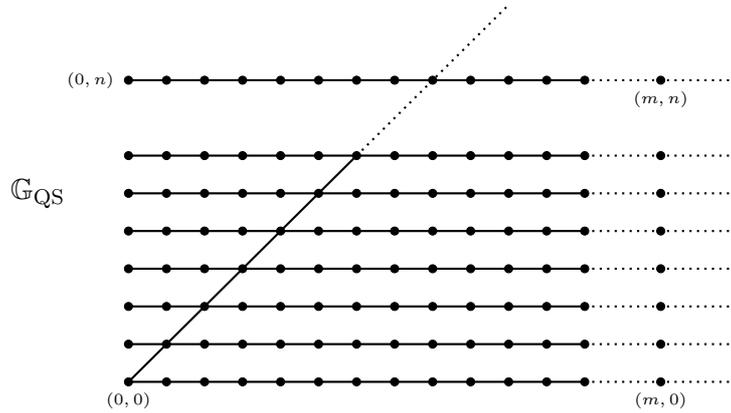
\begin{figure}
\begin{center}
\begin{tikzpicture}[scale=0.5]
\def\width{12};
\def\height{6};
\foreach \m in {0,...,\width}
   \foreach \n in {0,...,\height}
        \draw[fill=black] (\m,\n) circle (3pt);
\foreach \n in {0,...,\height}
    \draw[fill=black] (\number\width +2,\n) circle (3pt);
\foreach \m in {0,...,\width}
    \draw[fill=black] (\m,\number\height +2) circle (3pt);
\draw[fill=black] (\number\width+2,\number\height +2) circle (3pt);
\foreach \n in {0,...,\numexpr\height -1}
    \draw[thick] (\n,\n) -- (\n + 1 , \n + 1);
\foreach \m in {0,...,\numexpr\width -1}
     \foreach \n in {0,...,\height}
         \draw[thick] (\m,\n) -- (\m+1,\n);
\foreach \m in {0,...,\numexpr\width-1}
    \draw[thick] (\m,\number\height +2) -- (\m+1,\number\height +2);
\foreach \n in {0,...,\height} 
     \draw[thick,dotted] (\width,\n) -- (\number\width+2,\n) -- (\number\width+4,\n);
\draw[thick,dotted] (\number\width,\number\height+2) -- (\number\width+2,\number\height + 2) -- (\number\width+4,\number\height + 2) ;
\draw[dotted,thick] (\height,\height)--(\numexpr\height+2,\numexpr\height+2) -- (\numexpr\height+4,\numexpr\height+4);
\node at (-2,{0.5*(\number\height +4)}) {$\mathbb{G}_{\mathrm{QS}}$\,\,\,\,\,\,};
\node at (0,-0.5){\fontsize{6}{9}\selectfont $(0,0)$};
\node at (-1, \number\height+2) {\fontsize{6}{9}\selectfont$(0,n)$};
\node at (\number\width+2,-0.5) {\fontsize{6}{9}\selectfont $(m,0)$};
\node at (\number\width+2,\number\height+2-0.5) {\fontsize{6}{9}\selectfont $(m,n)$};
\end{tikzpicture}
\end{center}
\caption{The diagonal model in the quarter-space }
\label{figure 4}
\end{figure}

\section{Analysis of $H_{\mathrm{Vert}}$ -- proofs of Theorems \ref{thmsym}, \ref{booleleb} and \ref{transrecthm}}\label{Sec:proofofthmsym}
 
\subsection{Absolute continuity of $\mu_{\bm{0}}$}
 \label{acsection}
In this section, we take the first step toward proving the three theorems on $H_{\mathrm{Vert}}$:
\begin{lem}\label{lem:mudelta00ac} The spectral measure $\mu_{\bm{0}}$ for $H_{\mathrm{Vert}}$ associated to $\delta_{\bm{0}}$ is absolutely continuous.
\end{lem}

To prove Lemma \ref{lem:mudelta00ac} a useful tool is the Green's function, defined for $z\in \mathbb{C}\setminus \mathbb{R}$ by \eqref{Greendfn}. In particular,
\begin{equation}
G_{\mathrm{Vert}}\left(\bm{0},\bm{0};z\right)=\langle\delta_{\bm{0}},(H_{\mathrm{Vert}}-z)^{-1}\delta_{\bm{0}}\rangle.
\end{equation}
Its boundary values
$G_{\mathrm{Vert}}\left(\bm{0},\bm{0};E+i0\right):=\lim_{\varepsilon\to 0^{+}}G_{\mathrm{Vert}}\left(\bm{0},\bm{0};E+i\varepsilon\right)$ are well defined for Lebesgue almost every $E\in \mathbb{R}$ by a theorem of de la Vall\'e-Poussin, see \cite[Proposition B.3]{A-W-B} and references therein. Moreover, by \cite[Proposition B.4]{A-W-B}, the singular component of $\mu_{\bm{0}}$ with respect to Lebesgue measure is supported on the set 
\begin{equation}\label{singularsupport}\{E\in \mathbb{R}:\,\,\mathrm{Im}\, G_{\mathrm{Vert}}\left(\bm{0},\bm{0};E+i0\right)=\infty\}.
\end{equation}
We will prove that $\mu_{\bm{0}}$ is absolutely continuous by showing that the above set is empty.

Let $\mathbb{G}^{+}_{\mathrm{Vert}}$ be the component of $\mathbb{G}_{\mathrm{Vert}}$ which contains $(0,1)$ and is obtained from $\mathbb{G}_{\mathrm{Vert}}$ by deleting the edge connecting $(0,0)$ to $(0,1)$.  Denote by $H^{+}_{\mathrm{Vert}}$ the restriction of $H_{\mathrm{Vert}}$ to $\ell^2\left( \mathbb{G}^{+}_{\mathrm{Vert}}\right)$ and define $U:\ell^2\left( \mathbb{G}^{+}_{\mathrm{Vert}}\right)\rightarrow \ell^2\left( \mathbb{G}_{\mathrm{Vert}}\right)$ by $\left(U\psi\right)(m,n)=\psi(m,n-1).$ 
It follows from the geometric resolvent identity that 
\begin{equation}\label{recursive0}
G_{\mathrm{Vert}}\left(\bm{0},\bm{0};z\right) \ = \  \frac{G_{\mathrm{And}}^{(0)}(0,0;z)}{1 - \gamma^2G_{\mathrm{And}}^{(0)}(0,0;z) G^{+}_{\mathrm{Vert}}\left((0,1),(0,1);z\right)} \ ,
\end{equation}
where $G_{\mathrm{And}}^{(0)}(0,0;z)=\langle\delta_0,(\mathcal{h}^{(0)}_{\mathrm{And}}-z)^{-1} \delta_0\rangle$ is the Green's function of the Anderson model \eqref{eq:AM} and $G_{\mathrm{Vert}}^+$ denotes the Green's function of $H_\mathrm{Vert}^+$. However, since the random potential depends only on the first coordinate of the position, we have 
\begin{equation}\label{eq:symmetry}U^{\ast}H_{\mathrm{Vert}}U=H^{+}_{\mathrm{Vert}}.\end{equation}
Therefore $G^{+}_{\mathrm{Vert}}\left((0,1),(0,1);z\right)=G_{\mathrm{Vert}}\left(\bm{0},\bm{0};z\right)$ from which it follows, using \eqref{recursive0}, that
\begin{equation}\label{formula}
2\gamma^2 G_{\mathrm{Vert}}\left(\bm{0},\bm{0};z\right)=w-w\left(1-\frac{4\gamma^2}{w^2} \right)^{1/2} \ , \quad  \text{with } \quad  w= \tfrac{1}{G^{(0)}_{\mathrm{And}}(0,0;z)} \ .
\end{equation}
In \eqref{formula} we take the principal branch of the square root since $\mathrm{Im}\,G_{\mathrm{Vert}}\left(\bm{0},\bm{0};z\right)>0$ whenever $\mathrm{Im}z>0$ and, on the other hand, $\mathrm{Im} \, w = \mathrm{Im}\,\tfrac{1}{G^{(0)}_{\mathrm{And}}(0,0;z)}<0$.

The function $F(w)=w\left ( 1-\frac{4\gamma^2}{w^2} \right )^{1/2} - w$ satisfies $|F(w)|\le 2\gamma$ in the upper half plane $\{\Im w \ge 0\}$. 
Therefore, letting $\mathrm{Im}z\to 0$ in (\ref{formula}), we see that 
$|G_{\mathrm{Vert}}\left(\bm{0},\bm{0};E+i0\right)|\le \frac{1}{\gamma}$ for all $E\in \mathbb{R}$.  
In particular, the set in \eqref{singularsupport} is empty and the spectral measure $\mu_{\bm{0}}$ is purely absolutely continuous with a bounded density; see \cite[Appendix B, Proposition
B.4]{A-W-B}.\footnote{It is worth noting that this argument does not depend on the fact that we take the Anderson model on the horizontal layers.  Indeed, the same argument shows that if $H$ is of the form
$$H\psi(n_1,n_2)= \gamma I[n_1=0] \left [\psi(0,n_2+1) + I[n_2\ge 1] \psi(0,n_2-1) \right ] + \mathcal{h}\otimes I \psi(n_1,n_2)$$
with $\mathcal{h}$ any self-adjoint operator on $\ell^2(\mathbb{Z}_{\ge 0})$, then the Green's function of $H$ is bounded by $\frac{1}{\gamma}$ and the spectral measure $\mu_{\bm{0}}$ for $H$ is absolutely continuous.} This completes the proof of Lemma \ref{lem:mudelta00ac}.

\subsection{Floquet Theory and horizontal localization for $H_{\mathrm{Vert}}$}\label{floquetsection}
The vertical symmetry \eqref{eq:symmetry} of the graph $\mathbb{G}_{\mathrm{Vert}}$ and the definition of the operator $H_{\mathrm{Vert}}$ suggest the use of
a Fourier transform to help study the dynamics $e^{-itH_{\mathrm{Vert}}}$. Given $\psi \in \ell^1(\mathbb{G}_{\mathrm{Vert}})$, let
\begin{equation}\label{Fourierdfn}
\left(\mathcal{F}\psi\right)(n_1,p):=\sqrt{\frac{2}{\pi}}\sum^{\infty}_{n_2= 0}\psi(n_1,n_2)\sin\left(p(n_2+1)\right).
\end{equation}
Initially defined for $\psi\in \ell^1(\mathbb{G}_{\mathrm{Vert}})$, $\mathcal{F}$ may be extended to $\ell^2(\mathbb{G}_{\mathrm{Vert}})$ since $\big{\{}\sqrt{\frac{2}{\pi}}\sin(mp):\,\,m\in \mathbb{N}\big{\}}$ is a complete orthonormal system in $L^2[0,\pi]$. One shows that 
\begin{equation}\label{inversefourier}\mathcal{F^{-1}}(g)(n_1,n_2)=\sqrt{\frac{2}{\pi}}\int^{\pi}_{0}g(n_1,p)\sin\left(p(n_2+1)\right)\,dp
\end{equation}
is a unitary map from $\mathbb{L}^2\left([0,\pi];\ell^2\left(\mathbb{Z}_{\geq 0}\right)\right)$ onto $\ell^2(\mathbb{G}_{\mathrm{Vert}})$ and satisfies $\mathcal{F}^{-1}\mathcal{F}\psi = \psi$ for $\psi\in \ell^1(\mathbb{G}_{\mathrm{Vert}})$. Therefore $\mathcal{F}$ given by (\ref{Fourierdfn}) may be extended to a unitary map
\begin{equation*}
\mathcal{F}:\ell^2\left(\mathbb{G}_{\mathrm{Vert}}\right)\rightarrow 
\mathbb{L}^2\left([0,\pi];\ell^2\left(\mathbb{Z}_{\geq 0}\right)\right)
\end{equation*} with inverse given by (\ref{inversefourier}).
For simplicity of notation, we let $\widehat{\psi}(n_1,p)=\mathcal{F}\left(\psi\right)(n_1,p)$.
It is immediate from the above argument that the following version of Plancherel's identity holds
\begin{equation}\label{Plancherel}
\langle \widehat{\varphi},\widehat{\psi} \rangle_{\mathbb{L}^2\left([0,\pi];\ell^2\left(\mathbb{Z}_{\geq 0}\right)\right)}=\langle \varphi,\psi
\rangle_{\ell^2\left(\mathbb{G}_{\mathrm{Vert}}\right)}.
\end{equation}

From the definition of $H_{\mathrm{Vert}}$ one readily sees that
\begin{equation}\label{Fourier}
\widehat{{H}_{\mathrm{Vert}}\psi}(m,p) \ = \ \mathcal{h}^{(0)}_{\mathrm{And}} \widehat{\psi}(m,p) - 2\gamma\cos p \delta_{m=0} \widehat{\psi}(m,p) \ ,
\end{equation}
where the Anderson model $\mathcal{h}_{\mathrm{And}}^{(0)}$ (see \eqref{eq:AM}) acts on the first coordinate $m$, namely
\begin{equation}\label{eq:half-line And}
\mathcal{h}_{\mathrm{And}}^{(0)}\widehat{\psi}(m,p) \ = \ -\widehat{\psi}(m+1,p) - I[m\ge 1] \widehat{\psi}(m-1,p) +\omega(m)\widehat{\psi}(m,p) \ .    
\end{equation}
Equation (\ref{Fourier}) shows that $H_{\mathrm{Vert}}$ is unitarily equivalent to the direct integral $\int^\oplus_{[0,\pi]} \mathcal{h}_p$ on $$\mathbb{L}^2\left([0,\pi];\ell^2\left(\mathbb{Z}_{\geq 0}\right)\right) \cong \int_{[0,\pi]}^\oplus \ell^2(\mathbb{Z}_{\geq 0}) \ , $$
with the operators $\mathcal{h}_p$ on each fiber given by a rank-one perturbation of the Anderson model:
\begin{equation}\label{defhp}
\mathcal{h}_{p} \varphi \ = \ \mathcal{h}_{\mathrm{And}}^{(0)}\varphi - 2\gamma \cos p \langle \delta_0,\varphi \rangle \delta_0 \ ,
\end{equation}
for $\varphi\in\ell^2\left(\mathbb{Z}_{\geq 0}\right)$.
The following result on dynamical localization for $\mathcal{h}_{p}$ will be technically useful.

\begin{lem}\label{horloc}
Given $s\in (0,1)$ there exist positive constants $C_{\mathrm{And}}(s)$ and $\mu_{\mathrm{And}}=\mu_{\mathrm{And}}(s)$ such that, for all $m,n \in \mathbb{Z}_{\geq 0}$,
\begin{equation}\label{dynlocand}
\mathbb{E}\left(\sup_{|f|\leq 1}|\langle \delta_m,f\left(\mathcal{h}_{p}\right)\delta_n\rangle|\right)\ \leq \ 
Ae^{-\frac{\mu_{\mathrm{And}}}{2-s}|m-n|} \ ,
\end{equation}
with $A = \left((2\omega_{\max}+2\gamma)^{s}\|\rho\|_{\infty}\left(4+4\gamma+\omega_{\max}\right)C_{\mathrm{And}}(s)\right)^{\frac{1}{2-s}}$ and the supremum taken over all Borel measurable functions bounded by one.
\end{lem}
\noindent \textit{Remark}: This result follows easily from known results for the one-dimensional Anderson model, e.g., see \cite[Chapter 12]{A-W-B}, via rank-one perturbation formulas.  For completeness, we give a sketch of the proof in Appendix B. We note that the constants $C_{\mathrm{And}}$ and $\mu_{\mathrm{And}}$  are as in \eqref{eq:fmdecay}.

The localization  for $\mathcal{h}_p$ described in Lemma \ref{horloc} can  immediately be translated into a strong form of horizontal localization
for $H_{\mathrm{Vert}}$. For each $m_1\in \mathbb{Z}_{\ge 0}$ let $P_{m_1}$ denote the orthogonal projection of $\ell^2(\mathbb{G}_{\mathrm{Vert}})$ onto $\mathrm{Span}\{\delta_{(m_1,m_2)}\,\, |\,\ m_2\in \mathbb{Z}_{\geq 0}\}$.  We have the following

\begin{lem}\label{horizontalprojbound} For $s\in (0,1)$ let $C_{\mathrm{And}}(s)$, $\mu_{\mathrm{And}}(s)$ and $A$ be as in Lemma \ref{horloc}. Then for all $m_1,n_1\in \mathbb{Z}_{\ge 0}$ and $\varphi\in \ell^2(\mathbb{G}_{\mathrm{Vert}})$ we have
\begin{equation}\label{dynlocHvert}
\mathbb{E}\left(\sup_{|f|\leq 1} \norm{P_{m_1}f\left(H_{\mathrm{Vert}}\right)P_{n_1}\varphi}^2 \right) \ \leq \  A \|{\varphi}\|^2_{2}e^{-\frac{\mu_{\mathrm{And}}}{(2-s)}|m_1-n_1|}.  
    \end{equation}
    where the supremum is taken over all Borel measurable functions bounded by one.
\end{lem}

\begin{proof} By Plancherel's identity (\ref{Plancherel}) and \eqref{Fourier}, we have 
$$\langle \psi, P_{m_1} f\left(H_{\mathrm{Vert}}\right) P_{n_1}\varphi\rangle 
\ = \ \int^{\pi}_{0}\langle \widehat{P_{m_1}\psi},f(\mathcal{h}_{p})\widehat{P_{n_1}\varphi}\rangle\,dp
\ =\ \int^{\pi}_{0}\overline{{\widehat\psi}(m_1,p)}{\widehat \varphi}(n_1,p)\langle \delta_{m_1},f(\mathcal{h}_{p})\delta_{n_1}\rangle\,dp \ ,
$$
for any $\psi, \varphi\in \ell^2(\mathbb{G}_{\mathrm{Vert}})$.  
Taking absolute values and the supremum over $\psi$ with $\norm{\psi}_2 \le 1$ yields, by Cauchy-Schwarz and \eqref{Plancherel},
$$ \norm{P_{m_1} f\left(H_{\mathrm{Vert}}\right) P_{n_1}\varphi}^2 \ \le  \ \int_0^\pi \abs{\langle \delta_{m_1},f(\mathcal{h}_{p})\delta_{n_1}\rangle}^2 
\abs{\widehat \varphi(n_1,p)}^2 dp  .$$
We find after taking the expectation that
$$\mathbb{E} \left ( \sup_{|f|\le 1} \norm{P_{m_1} f\left(H_{\mathrm{Vert}}\right) P_{n_1}\varphi}^2  \right ) \ \le \ 
 \int_0^\pi  \mathbb{E} \left (  \sup_{|f|\le 1} \abs{\langle \delta_{m_1},f(\mathcal{h}_{p})\delta_{n_1}\rangle} \right ) 
\abs{\widehat \varphi(n_1,p)}^2 dp \ ,$$
where we noted that $\abs{\langle \delta_{m_1},f(\mathcal{h}_{p})\delta_{n_1}\rangle} \le 1$ for $|f|\le 1$.
Using Plancherel's identity \eqref{Plancherel} one more time, the result now follows from Lemma \ref{horloc}.
\end{proof}
    
\begin{cor} For each $q$, we have $\sup_T M^{q}_{T}\left(H_{\mathrm{Vert}},\abs{X_1}\right) < \infty$.
\end{cor}
    \begin{proof}
     Observe that \begin{align*}
         M^{q}_{T}\left(H_{\mathrm{Vert}},\abs{X_1}\right)\ &= \ \frac{2}{T}\int_0^\infty e^{-\frac{2t}{T}}  \sum_{\bm{n}\in {\mathbb{G}}_{\mathrm{Vert}}} |n_1|^q\mathbb{E}\left(|\langle \delta_{\bm{n}},e^{-itH_{\mathrm{Vert}}}\delta_{\bm{0}}\rangle |^2\right)  dt  \\
         & = \ \frac{2}{T}\int_0^\infty e^{-\frac{2t}{T}} \sum^{\infty}_{n_1=0}|n_1|^q \mathbb{E} \left ( \|P_{n_1}e^{-itH_{\mathrm{Vert}}}P_0\delta_0\|^2\right )
         dt \\
         &\le \ \sum^{\infty}_{n_1=0}|n_1|^q\mathbb{E} \left ( \sup_{t} \|P_{n_1}e^{-itH_{\mathrm{Vert}}}P_0\delta_0\|^2\right ) \ < \ \infty , \\
     \end{align*}
     where Lemma \ref{horizontalprojbound} was used in the last step.
    \end{proof}

\subsection{Lower Bound on $M^{q}_{T}\left(H_{\mathrm{Vert}}\right)$: Proof of  Theorem \ref{thmsym}}
\label{proofofthmsym}
For a self-adjoint operator $H$ on $\ell^2(\mathbb{G}_{\mathrm{Vert}})$, the Guarneri bound \cite{Guarnieribound} states that if the spectral measure
$\mu_{\bm{0}}$ is uniformly $\alpha$-H\"older continuous then
\begin{equation}\label{Guarnieri} M^{q}_{T}\left(H\right)\ \geq \ C T^{\frac{\alpha q}{2}}
\end{equation}
holds for some $C>0$.
Recall that a finite Borel measure $\mu$ is said to be \emph{uniformly $\alpha$-H\"older continuous} if there exists a constant $C<\infty$ such that for all intervals $I$ with
$|I|<1$ we have
$\mu(I)\leq C|I|^{\alpha}$; see \cite[Definition 2.2]{A-W-B}. In particular, if the spectral measure $\mu_{\bm{0}}$ for $H$
is purely absolutely continuous with a bounded density, then we have $M^{q}_{T}\left(H\right)\geq C T^{\frac{q}{2}}$. 

To bound $M^q_T(H_{\mathrm{Vert}},|X_2|)$, we will use an adaptation of the proof of the Guarneri bound, incorporating improvements due to the disorder which are specific to our context. To begin, we reproduce the derivation of \eqref{Guarnieri}. The starting point is the following estimate on averaged quantum dynamics in an abstract context.
\begin{thm}[Strichartz-Last] \label{Strichartz-Last} Let $H$ be a self-adjoint operator on a Hilbert space $\mathcal{H}$ and assume the
spectral measure of $H$ with respect to $\psi$ is uniformly $\alpha$-H\"older continuous for some $\alpha\in [0,1]$. Then, there exists a
constant $C_{\psi}<\infty$ such that for all $\phi \in \mathcal{H}$ and all $T>0$
\begin{equation}
\frac{1}{T}\int^{T}_{0}|\langle\phi,e^{-itH}\psi\rangle|^2\leq \frac{C_{\psi}\|\phi\|^2}{T^{\alpha}}.
\end{equation}
\end{thm}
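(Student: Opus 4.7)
The plan is to reduce the matrix element to a Fourier integral with respect to the spectral measure $\mu_\psi$ and then apply a Gaussian regularization followed by Cauchy--Schwarz, closing the argument with a direct estimate on a Gaussian smoothing of $\mu_\psi$ that exploits the H\"older hypothesis.

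First I would note that it suffices to consider the component of $\phi$ in the cyclic subspace $\mathcal{H}_\psi$ generated by $\psi$, since the orthogonal complement gives no contribution to $\mel{\phi}{e^{itH}}{\psi}$. Using the unitary equivalence $\mathcal{H}_\psi \simeq L^2(\mathbb{R}, d\mu_\psi)$ under which $\psi \leftrightarrow 1$, the projection of $\phi$ corresponds to some $g \in L^2(d\mu_\psi)$ with $\|g\|_{L^2(\mu_\psi)} \leq \|\phi\|$, and
\begin{equation*}
\mel{\phi}{e^{itH}}{\psi} = \int_{\mathbb{R}} \overline{g(E)}\, e^{itE}\, d\mu_\psi(E).
\end{equation*}
Setting $F(t) = \int e^{itE}\overline{g(E)}\, d\mu_\psi(E)$, the task becomes proving $\frac{1}{T}\int_0^T |F(t)|^2\, dt \leq C_\psi \|g\|_{L^2(\mu_\psi)}^2\, T^{-\alpha}$.

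Next I would introduce a Gaussian weight. Since $e^{-t^2/T^2}\geq e^{-1}$ on $[0,T]$, one has $\int_0^T |F(t)|^2\, dt \leq 2e\int_{-\infty}^{\infty} e^{-t^2/T^2}|F(t)|^2\, dt$. Expanding $|F(t)|^2$ as a double integral against $d\mu_\psi(E)\, d\mu_\psi(E')$, interchanging the order of integration, and using the explicit Fourier transform of the Gaussian gives
\begin{equation*}
\int_{-\infty}^{\infty} e^{-t^2/T^2}|F(t)|^2\, dt \;=\; \sqrt{\pi}\, T \int\!\!\int \overline{g(E)} g(E')\, e^{-T^2(E-E')^2/4}\, d\mu_\psi(E)\, d\mu_\psi(E').
\end{equation*}
Applying Cauchy--Schwarz in $L^2(d\mu_\psi \times d\mu_\psi)$ (against the positive kernel above) bounds this by $\sqrt{\pi}\,T \int |g(E)|^2\, G_T(E)\, d\mu_\psi(E)$, where $G_T(E) := \int e^{-T^2(E-E')^2/4}\, d\mu_\psi(E')$.

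The remaining and most delicate step is a uniform estimate $G_T(E)\leq C\, T^{-\alpha}$. I would decompose the integration domain into the dyadic-like shells $\{E' : |E-E'|\in [k/T,(k+1)/T]\}$ for $k=0,1,2,\ldots$. On each shell the Gaussian factor is bounded by $e^{-k^2/4}$, while the uniform $\alpha$-H\"older continuity of $\mu_\psi$ gives $\mu_\psi([E-(k+1)/T, E+(k+1)/T]) \leq C\,((k+1)/T)^\alpha$ as long as $(k+1)/T < 1$; the shells with $k\geq T$ contribute a tail controlled by $\mu_\psi(\mathbb{R})\cdot e^{-k^2/4}$ which is summable and negligible. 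Summing the geometric-type series $\sum_k e^{-k^2/4}(k+1)^\alpha$ yields the bound $G_T(E)\leq C'T^{-\alpha}$ uniformly in $E$. Combining this with the previous inequalities produces
\begin{equation*}
\frac{1}{T}\int_0^T |F(t)|^2\, dt \;\leq\; 2e\sqrt{\pi}\, C'\, \|g\|_{L^2(\mu_\psi)}^2\, T^{-\alpha} \;\leq\; C_\psi\, \|\phi\|^2\, T^{-\alpha},
\end{equation*}
which is the desired inequality. The main obstacle is the uniform control of $G_T(E)$ in Step 5; the care needed there is to separate the shells where the H\"older estimate applies from those where it formally fails ($(k+1)/T\geq 1$), relying on the Gaussian decay to absorb the latter, and to track that the constant depends only on $\mu_\psi$ (through the H\"older constant and total mass), not on $\phi$.
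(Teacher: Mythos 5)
Your proof is correct and reconstructs the standard Strichartz--Last argument (Gaussian time-averaging, expansion into a double integral against $d\mu_\psi \times d\mu_\psi$, Cauchy--Schwarz with the positive Gaussian kernel, and a shell decomposition to bound the kernel convolution $G_T$ via the uniform H\"older hypothesis), which is precisely the argument in the reference \cite[Theorem 2.3]{A-W-B} that the paper cites without reproving. The only nitpick is cosmetic: on $[0,T]$ one has $e^{-t^2/T^2}\geq e^{-1}$, so the constant $2e$ can be replaced by $e$, and the tail shells ($(k+1)/T\geq 1$) require the trivial remark that $G_T(E)\leq \mu_\psi(\mathbb{R})$ already gives the claim for $T\lesssim 1$, while for large $T$ the Gaussian tail $e^{-(T/2)^2/4}$ beats any power of $T^{-1}$ --- both of which you indicate and which are easily filled in.
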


This result, which may be found in \cite[Theorem 2.3]{A-W-B}, can be used to prove the Guarnieri bound (\ref{Guarnieri}) as follows. Suppose that the spectral measure $\mu_{\bm{0}}$ for $H$  
is uniformly $\alpha$-H\"older continuous. Writing $M^{q}_{T}\left(H\right)=\frac{2}{T}\sum_{\bm{n}}|\bm{n}|^{q}\int^{\infty}_{0}e^{\frac{-2t}{T}}|\langle \delta_{\bm{n}},e^{-itH_{\omega}}\delta_{\bm{0}}\rangle|^2\,dt$, we obtain
\begin{align*}
M^q_T(H) \ &\geq \ 2e^{-2}N^q
\sum_{|{\bm{n}}|>N}\frac{1}{T} \int^{T}_{0}|\langle \delta_{\bm{n}},e^{-itH}\delta_{\bm{0}}\rangle|^2\,dt\\ &= \ 2e^{-2}N^q\left(1- \sum_{|{\bm{n}}|\leq N}\frac{1}{T}\int^{T}_{0}|\langle \delta_{\bm{n}},e^{-itH}\delta_{\bm{0}}\rangle|^2\,dt \right) \ . 
\end{align*}
Applying the Strichartz-Last Theorem to each term in the sum on the right hand side, we see that 
$$M^q_T(H) \ \geq \ 2e^{-2}N^q\left(1-\frac{N(N+1)}{2}\frac{C_{\delta_0}}{T^{\alpha}}\right)\ 
 \geq \ CT^{\frac{q \alpha}{2}} \ , $$
 where $\frac{N(N+1)}{2}$ counts the number of sites in $\mathbb{G}_{\mathrm{Vert}}$ with $|{\bm{n}}|\le N$ and in the last step we chose $N^2$ comparable to $T^{\alpha}$ and adjusted the constant accordingly.

For $H=H_{\mathrm{Vert}}$, we now follow the proof of \eqref{Guarnieri} to estimate $\mathbb{E}(M_T^q(H,|X_2|))$. First, we have 
\begin{equation}\mathbb{E}\left ( M_T^q(H,|X_2|) \right ) \ \ge \ 2 e^{-2} N^q \left ( 1 - \sum_{n_1=0}^\infty \sum_{n_2=0}^N \frac{1}{T} \int_0^T 
\mathbb{E}\left (|\langle \delta_{(n_1,n_2)} ,e^{-itH}\delta_{\bm{0}}\rangle|^2 \right ) \,dt \right ) \ .\label{MTqX2bound}\end{equation}
Let $K\ge 0$ be a sufficiently large integer to be specified below. Applying the localization bound of Lemma \ref{horizontalprojbound} for $n_1\ge K$ and the Strichartz-Last Theorem \ref{Strichartz-Last} for $n_1<K$, we find that
$$
\mathbb{E}\left ( M_T^q(H,|X_2|) \right ) \ \ge  \ 2e^{-2} N^q \left (1-  C K\frac{N+1}{T}  -  \frac{A_s}{1-e^{-\frac{\mu_{\mathrm{And}}}{2-s}}}   e^{-\frac{\mu_{\mathrm{And}}}{2-s}K}  \right )
$$
for $s\in (0,1)$.  Choosing $K$ sufficiently large and then taking $N\propto \frac{T}{K}$ (for $T$ sufficiently large) 
yields the desired bound, equation \eqref{eqsym}, and completes the proof of Theorem \ref{thmsym}.

\subsection{A formula for spectral measures: proof of Theorem \ref{booleleb}}\label{sec:booleleb}
The proof of Theorem \ref{booleleb} is based on an exact formula for spectral measures of $H_{\mathrm{Vert}}$ which is also crucial to the analysis of the transient and recurrent components in the next section.  Let 
\begin{equation}\label{eq:Weyl}
\Sigma(E) \ := \ -\frac{1}{\langle \delta_0, (\mathcal{h}_{\mathrm{And}}^{(0)} -E -i 0)^{-1} \delta_0 \rangle }
\end{equation}
denote the \emph{Weyl} function for the Anderson model \eqref{eq:AM}, where the boundary values exists for Lebesgue almost every $E \in \mathbb{R}$ (see, e.g., \cite[Theorem 5.9.1]{Simon-Book}).

\begin{lem}\label{Lem:formulameasure} Denote by 
$\mu_{n_2,n'_2}$ the spectral measure of $H_{\mathrm{Vert}}$ associated to $\delta_{(0,n_2)}$ and $\delta_{(0,n'_2)}$. Then, for $n_2,n'_2\in \mathbb{Z}_{\geq0}$ we have that
\begin{equation}\label{formulaspectralmeasure}
d\mu_{n_2,n'_2}(E)=\frac{1}{\pi\gamma}\sqrt{\left [ 1-\frac{\Sigma(E)^2}{4\gamma^2} \right ]_+}U_{n_2}\left(\frac{\Sigma(E)}{2\gamma}\right) U_{n_2'}\left(\frac{\Sigma(E)}{2\gamma}\right) \,dE,
 \end{equation}
 where $[x]_+:=\max(x,0)$ denotes the positive part and $U_m(z)$, $m=0,1,2,\ldots$, denote the Chebyshev polynomials of second kind.
\end{lem}
\begin{proof}
Let $L\in \mathbb{N}$ be given and let $H^{L}_{\mathrm{Vert}}$ denote the restriction of $H_{\mathrm{Vert}}$ to $\ell^{2}\left(\{0,...,L\}\times \mathbb{Z}_{\geq 0}\right)$. Since
$$
 \widehat{\delta}_{(0,n_2)}(m_1,p)\ =\ \sqrt{\frac{2}{\pi}} \, \delta_{0}(m_1)\sin\left(p(n_2+1) \right),
$$
it follows from Plancherel's identity (\ref{Plancherel}) that if $f$ is, for example, a bounded continuous function, then
 \begin{align*} \langle \delta_{(0,n_2)},f\left(H^{L}_{\mathrm{Vert}}\right)\delta_{(0,n'_2)}\rangle \ &= \ \frac{2}{\pi}\int^{\pi}_{0}\langle
 \delta_0,f\left(\mathcal{h}^L_p\right)\delta_0\rangle\sin\left((n_2+1) p\right)\sin\left((n'_2+1) p\right)\,dp.\\
 &= \ \frac{2}{\pi}\int^{\pi}_{0}\int^{\infty}_{-\infty}\left(f(E)d\mu^{p,L}_{0}(E)\right)\sin\left((n_2+1) p\right)\sin\left((n'_2+1) p\right)\,dp.
 \end{align*}
 where $\mu^{p,L}_{0}$ is the spectral measure for $\mathcal{h}^L_p$ associated to $\delta_0$ and $\mathcal{h}^L_p$ is the restriction of $\mathcal{h}_p$  to $\ell^2(\{0,\ldots,L\})$ (see eq.\ \eqref{defhp}). Recalling that 
 $U_{m}(\cos p)=\frac{\sin\left((m+1)p\right)}{\sin p}$,
 we obtain
 \begin{equation}\label{spectralmeas:pform}
     \langle \delta_{(0,n_2)},f\left(H^{L}_{\mathrm{Vert}}\right)\delta_{(0,n'_2)}\rangle=\frac{2}{\pi}\int^{\pi}_{0}\int^{\infty}_{-\infty}\left(f(E)d\mu^{p,L}_{0}(E)\right)U_{n_2}(\cos p)U_{n'_2}(\cos p)\sin^2p\,dp
 \end{equation}

 Let $\mathcal{h}^{(0,L)}_{\mathrm{And}}$ denote the restriction of the Anderson model $\mathcal{h}_{\mathrm{And}}^{(0)}$ to $\ell^2\left(\{0,...,L\}\right)$.  Since $\mathcal{h}^L_p$ is a rank-one perturbation of $\mathcal{h}^{(0,L)}_{\mathrm{And}}$, it follows from  \cite[Theorem 5.3]{A-W-B} that
 \begin{equation}\label{distributional}d\mu^{p,L}_{0}(E)=\delta\left(\Sigma_{L}(E)+2\gamma \cos p\right)\,dE
 \end{equation}
where $\frac{1}{\Sigma_{L}(E)}=- \langle \delta_0,\left (\mathcal{h}^{(0,L)}_{\mathrm{And}}-E \right )^{-1} \delta_0 \rangle$. As explained in \cite{A-W-B}, the distributional identity (\ref{distributional}) is equivalent to stating that $\mu^{p,L}_{0}$ is supported on the set $\{E\in \mathbb{R}:\,\,\Sigma_{L}(E)=-2\gamma \cos p\}$, assigning to each point is this set the mass $\mu^{p,L}_{0}(\{E\})=\frac{1}{\Sigma'_{L}(E)}$. Given $E\in \mathbb{R}$ with  $\Sigma_{L}(E)\in(-2\gamma,2\gamma)$, let $q=q(E)$ denote the unique value in $(0,\pi)$ for which $-2\gamma \cos (q)=\Sigma_{L}(E).$ We have that
 \begin{equation}\label{eq:changeofvar} 
 \delta\left(\Sigma_{L}(E)+2\gamma \cos p\right)dp=\frac{1}{2\gamma \sin(q)}\delta(q-p)dp,
 \end{equation}
 see equation (5.11) in \cite{A-W-B}.
 
In particular, letting $P_{n_2,n'_2}(z)=U_{n_2}(z)U_{n'_2}(z)$ it follows from Fubini's theorem along with equations \eqref{spectralmeas:pform}, \eqref{distributional} and \eqref{eq:changeofvar} that
 \begin{equation}
     \langle
 \delta_{(0,n_2)},f\left(H^L_{\mathrm{Vert}}\right)\delta_{(0,n'_2)}\rangle \ =\ \frac{1}{\pi\gamma}\int_{-\infty}^\infty f(E)\sqrt{\left[1-\frac{\Sigma_{L}(E)^2}{4\gamma^2}\right ]_+}P_{n_2,n_2'}\left(\frac{\Sigma_{L}(E)}{2\gamma}\right) \,dE \ ,\label{eq:lastofformulaspectralmeasure}
 \end{equation}
 where we have used that $\int^{\pi}_{0}\delta(p-q)\,dp=1$ holds for $q\in (0,\pi)$.
Letting $L\to \infty$, eq.\ \eqref{formulaspectralmeasure} follows from the dominated convergence theorem.\qedhere
\end{proof}

A first consequence of the prior lemma is the cyclicity of $\delta_{\bm{0}}$:
\begin{lem}\label{cyclic}
$\delta_{\bm{0}}$ is a cyclic vector for $H_{\mathrm{Vert}}.$
\end{lem}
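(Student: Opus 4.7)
The plan is to push the question through the Fourier transform $\mathcal{F}$ of the preceding section. A direct computation from (\ref{Fourierdfn}) gives
\[
\mathcal{F}\delta_{(0,0)}(n_1, p) = \sqrt{\tfrac{2}{\pi}}\,\sin(p)\,\delta_{n_1,0},
\]
and since $\mathcal{F}$ is unitary, the lemma is equivalent to showing that the vector $\hat\xi := \sqrt{2/\pi}\,\sin(p)\,\delta_0$ is cyclic for the fibered operator $\mathcal{H} := \int_{[0,\pi]}^\oplus \mathcal{h}_p\,dp$ described by (\ref{Fourier}).

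The argument would rest on two structural facts. First, for every fixed $p$, $\mathcal{h}_p$ is a half-line Jacobi operator whose off-diagonal entries are all $-1$, so the standard three-term recursion yields both the cyclicity of $\delta_0$ for $\mathcal{h}_p$ and the triangular decomposition
\[
\mathcal{h}_p^k\,\delta_0 = (-1)^k \delta_k + \sum_{j < k} a_{k,j}(\cos p)\,\delta_j,
\]
where the coefficients $a_{k,j}$ are polynomials in $\cos p$ because $\mathcal{h}_p$ depends on $p$ only through the single diagonal entry $\omega(0) - 2\gamma\cos(p)$ at site $0$. Second, the Chebyshev identity $\sin((n+1)p) = \sin(p)\,U_n(\cos p)$ rewrites the canonical orthonormal basis $\{\delta_m \otimes \sqrt{2/\pi}\sin((n+1)p)\}_{m,n \geq 0}$ of $\ell^2(\mathbb{Z}_+) \otimes L^2([0,\pi])$ as the family $\{\sqrt{2/\pi}\,\sin(p)\,U_n(\cos p)\,\delta_m\}_{m,n \geq 0}$. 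With these facts in hand, I would prove by induction on $m$ that each such basis vector lies in the closed Krylov subspace of $\hat\xi$: at each step one solves a triangular linear system over polynomials in $\cos p$, choosing the scalar coefficients of a polynomial $P$ so that $P(\mathcal{H})\,\hat\xi$ realizes $\sqrt{2/\pi}\,\sin(p)\,U_n(\cos p)\,\delta_m$ modulo vectors supported on $\delta_0, \ldots, \delta_{m-1}$, the latter being absorbed into the inductive hypothesis.

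The hard part will be verifying, at each stage of the induction, that the triangular system actually admits solutions. This reduces to the algebraic question of whether the family $\{a_{k,m}(\cos p)\}_{k > m}$ spans all polynomials in $\cos p$ of arbitrary degree, which in turn requires quantitative control on how the degree of $a_{k,j}$ grows with $k - j$; the expected bound $\deg_{\cos p} a_{k,j} \leq \lfloor (k-j)/2 \rfloor$ together with an explicit computation of leading coefficients should suffice to close the argument.
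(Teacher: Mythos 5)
Your reduction via $\mathcal{F}$ to the cyclicity of $\hat\xi = \sqrt{2/\pi}\,\sin(p)\,\delta_0$ for the fibered operator, and the Chebyshev reinterpretation of the canonical basis, are both sound and match the setup the paper uses. The paper, however, then proceeds very differently: it invokes Simon's cyclicity result for the fiber operator $\mathcal{h}_p$ to obtain $\delta_{(n_1,0)} \in \mathcal{H}_0$ for all $n_1$, and then recovers the remaining basis vectors $\delta_{(n_1,n_2)}$, $n_2 \geq 1$, by an elementary induction using the explicit action $H_{\mathrm{Sym}}\delta_{(0,0)} = -\gamma\delta_{(0,1)} + \delta_{(1,0)} + \omega(0)\delta_{(0,0)}$. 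Your plan instead attempts to produce every basis vector $\sqrt{2/\pi}\sin(p)U_n(\cos p)\delta_m$ directly from finite polynomial Krylov vectors.

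That plan has a structural obstruction which cannot be repaired by sharpening the degree bounds. To produce $\sqrt{2/\pi}\sin(p)U_n(\cos p)\delta_m$ modulo vectors supported on $\delta_0,\dots,\delta_{m-1}$, you need a polynomial $P$ such that $P(\hat H)\hat\xi$ has no $\delta_j$ component for $j > m$. Since $\mathcal{h}_p^k\delta_0 = (-1)^k\delta_k + \text{(lower)}$, the top nonvanishing $\delta$-component of $P(\hat H)\hat\xi$ sits at $j = \deg P$, so you are forced to take $\deg P = m$. But then the coefficient of $\delta_m$ in $P(\hat H)\hat\xi$ is $(-1)^m\,(\text{leading coeff of }P)\cdot\sqrt{2/\pi}\sin(p)$ — a \emph{constant} multiple of $\sin p$, with no $\cos p$ dependence at all. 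So your triangular system has a solution only for $n = 0$; for $n \geq 1$ the target $U_n(\cos p)\delta_m$ is simply not attainable within the ansatz, no matter how the $a_{k,j}$ behave. A dimension count makes the same point: the span of $\{\hat\xi, \hat H\hat\xi, \dots, \hat H^K\hat\xi\}$ has dimension at most $K+1$, while the space of vectors $\sqrt{2/\pi}\sin(p)(\cos p)^n\delta_j$ with $j+n \leq K$ (which is exactly where $\hat H^k\hat\xi$ live, since $\deg_{\cos p}a_{k,j} = k - j$, not $\lfloor(k-j)/2\rfloor$) has dimension $\binom{K+2}{2}$. The cyclic subspace equals everything only after taking closures; it cannot be exhibited by finite-degree polynomial identities, so an induction built on solving finite triangular systems over $\mathbb{C}[\cos p]$ will not close. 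Any correct argument must at some point use a genuinely infinite-dimensional fact — for the paper, this is Simon's cyclicity theorem for the fiber operator together with the observation that the fibering parameter $p$ is determined by the spectral parameter through $\Sigma(E) = -2\gamma\cos p$, which is what legitimizes passing from fiberwise cyclicity to cyclicity for the direct integral.
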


\begin{proof}
Denote by ${\mathcal{H}}_{\bm{0}}$ the cyclic subspace of $H_{\mathrm{Vert}}$ associated to $\delta_{\bm{0}}$. Let $n_2\in \bbZp$ and let $P_{{\mathcal{H}}_{0,0}}\delta_{(0,n_2)}$ be the orthogonal projection of $\delta_{(0,n_2)}$ onto ${\mathcal{H}}_{\bm{0}}$.  By lemma \ref{Lem:formulameasure}, we have  $$d\mu_{0,n_2}(E)\ =\ U_{n_2}\left(\frac{\Sigma(E)}{2\gamma}\right)\,d\mu_{\bm{0}}(E) \ ,$$ where $d\mu_{0,n_2}$ is the spectral measure associated to the pair of vectors $(\delta_{(n_2,0)},\delta_{\bm{0}})$ and 
\begin{equation}\label{eq:lmu00}d\mu_{\bm{0}}(E) \ = \ \frac{1}{\pi\gamma}\sqrt{\left[1-\frac{\Sigma(E)^2}{4\gamma^2}\right]_+} \, d E
\end{equation}
is the spectral measure associated to $\delta_{\bm{0}}.$    It follows from the spectral theorem that
\begin{equation*}
\|P_{{\mathcal{H}}_{\bm{0}}}\delta_{(0,n_2)}\|^2 \ = \ \int^{\infty}_{-\infty}U_{n_2}^{2}\left(\frac{\Sigma(E)}{2\gamma}\right)d{\mu}_{\bm{0}}(E) 
\ .
\end{equation*}
Applying Lemma \ref{Lem:formulameasure} a second time with $n_2=n'_2$, we conclude that
$$ \|P_{{\mathcal{H}}_{\bm{0}}}\delta_{(0,n_2)}\|^2 \ = \ \frac{1}{\pi \gamma} \int^{\infty}_{-\infty}U_{n_2}^{2}\left(\frac{\Sigma(E)}{2\gamma}\right) \sqrt{\left[1-\frac{\Sigma(E)^2}{4\gamma^2}\right ]_+} \ = \ \left \| \delta_{(0,n_2)} \right \|^2 \ = \ 1 \ . $$
Thus $\delta_{(0,n_2)}\in {\mathcal{H}}_{\bm{0}}$ for each $n_2\in \mathbb{Z}_{\geq 0}$. It readily follows from the definition of $H_{\mathrm{Vert}}$ that $\ell^2\left(\mathbb{G}_{\mathrm{Vert}}\right)=\overline{\mathrm{Span}\{H^{(j)}_{\mathrm{Vert}}\delta_{(0,n_2)}:\,\,j,n_2\in \mathbb{Z}_{\geq 0}\}}$. Therefore $\delta_{\bm{0}}$ is a cyclic vector for $H_{\mathrm{Vert}}.$ 
\end{proof}

We are now ready to complete the proof of Theorem \ref{booleleb}. Since $\delta_{\bm{0}}$ is cyclic and $\mu_{\bm{0}}$ is absolutely continuous, it follows that $H_{\mathrm{Vert}}$ has simple, purely absolutely continuous spectrum.  By Lemma \ref{Lem:formulameasure}, $\mu_{\bm{0}}$ is supported on $\{ E \ : \ |\Sigma(E)| <2\gamma\}$ and its density is  bounded by $\frac{1}{\pi\gamma}$ (see eq.\ \eqref{eq:lmu00}).  To compute the Lebesgue measure of the support, we use the following
\begin{lem}\label{lem:recursiveanderson} The Weyl function $\Sigma(E)$ satisfies
\begin{equation}\label{recursiveanderson}
\Sigma(E)=E-\omega(0)+\langle \delta_1 , (\mathcal{h}_{\mathrm{And}}^{(1)}-E-i 0 )^{-1} \delta_1 \rangle \ ,
\end{equation}
with $\mathcal{h}_{\mathrm{And}}^{(1)}$ the Anderson model \eqref{eq:AM} on $\ell^2(\mathbb{Z}_{\ge 1})$.
\end{lem}
\begin{proof} Letting
$\varphi=(\mathcal{h}_{\mathrm{And}}^{(0)}-z)^{-1}\delta_0$, we have by definition of $\mathcal{h}_{\mathrm{And}}^{(0)}$ that
\begin{equation*}
-\varphi(1)+\left(\omega(0)-z\right)\varphi(0) \ = \ 1 \ .
\end{equation*}
Since, by the geometric resolvent identity,
\begin{equation*}
\varphi(1) \ = \ \langle \delta_1 , (\mathcal{h}_{\mathrm{And}}^{(0)}-E-i 0 )^{-1} \delta_0 \rangle \ =\ \langle \delta_1 , (\mathcal{h}_{\mathrm{And}}^{(1)}-E-i 0 )^{-1} \delta_1 \rangle \varphi(0) \ ,
\end{equation*}
eq.\ \eqref{recursiveanderson} follows by taking $z=E+i\varepsilon$ and $\varepsilon \to 0^+$.
\end{proof}
Returning to the proof of Theorem \ref{booleleb}, we see that 
$$\left | \left \{ E \ : \ |\Sigma(E)|< 2\gamma \right \} \right | \ = \ \left | \left \{ E \ : \ |E-\omega(0)+ F(E+i0) |< 2\gamma \right \} \right | \ , $$
where
$ F(z)  = \langle \delta_1 , (\mathcal{h}_{\mathrm{And}}^{(1)}-z )^{-1} \delta_1 \rangle  = \int_{\mathbb{R}} \frac{1}{u-z} d \mu_1(u)$,
with $\mu_1$ the spectral measure for $\mathcal{h}_{\mathrm{And}}^{(1)}$ associated to $\delta_1$. Almost surely, $\mu_1$ is pure-point, hence purely singular (see, e.g., \cite{J-Z}). Thus, with probability one, we have
$\left | \left \{ E \ : \ |\Sigma(E)|< 2\gamma \right \} \right |  = 4\gamma $,
by Proposition \ref{booletype}.  This completes the proof.

\subsection{Transient and recurrent components: proof of Theorem \ref{transrecthm}}\label{Sec:transient}
We now turn to the study of the transient and recurrent components of the spectrum of $H_{\mathrm{Vert}}$.  We begin by recalling some measure theoretic topology from \cite{A-S}.  An \emph{event} is an equivalence class $[S]$ of Borel subsets of $\mathbb{R}$ under the relation $S\sim T$ if $S$ and $T$ differ by a Lebesgue measure zero set, i.e., $|S\Delta T| =0$.  The \emph{support} of an absolutely continuous measure $d\mu=f(E)dE$ is the \emph{event} $[\{f>0\}]$. The \emph{essential interior} of an event $[S]$ is the open set
$$ U \ = \ \{ E \ : \ |(E-t,E+t)\cap S| = 2t \quad \text{for some } t >0 \} \ $$
and the \emph{essential frontier} of $[S]$ is the event $[S\setminus U ]$.

Given a self-adjoint operator $H$ on a separable Hilbert space, a \emph{maximal spectral measure for $H$} is a Borel measure $\mu$ such that for any Borel set $A$,  $\mu(A)>0$ if and only if the corresponding spectral projection satisfies $P_A(H)\neq 0$. Every self-adjoint operator $H$  admits a maximal spectral measure $\mu$ and any other spectral measure $\mu_{\varphi}$ is absolutely continuous with respect to $\mu$, see \cite[Lemma 3.16]{Teschl}. The \emph{$H$-event} is the support of the absolutely continuous part of a maximal spectral measure $\mu$ for $H$. If $H$ has a cyclic vector $\psi$, then the $H$-event coincides with the support of the absolutely continuous part of the spectral measure $\mu_\psi$. A key result of Avron and Simon is that the essential interior and essential frontier of the $H$-event determine the transient and recurrent spectrum of $H$:
\begin{thm}[{\cite[Theorem 3.4]{A-S}}]\label{A-Sthm} Let $H$ be a self-adjoint operator on $\mathcal{H}$, a separable Hilbert space. Let $[A]$ be the $H$-event and let $[B]$, $[C]$ be its essential interior and essential frontier.  Then 
$$ \mathcal{H}_{\mathrm{tac}} = E_B \mathcal{H}_{\mathrm{ac}}; \quad \mathcal{H}_{\mathrm{rac}} = E_C \mathcal{H}_{\mathrm{ac}} \ . $$
\end{thm}

For $H_{\mathrm{Vert}}$ we have the following
\begin{lem}\label{lem:Hevent} The $H_{\mathrm{Vert}}$-event is $[S]$ where $ S  :=  \{ E \ : \ |\Sigma(E)|< 2 \gamma \} $. The essential interior of $[S]$ is the open set $S \setminus [-2,2+\omega_{\mathrm{max}}]$
and the essential frontier of $[S]$ is the event $[S\cap [-2,2+\omega_{\mathrm{max}}]].$
\end{lem}

\begin{proof}[Proof of Lemma \ref{lem:Hevent}]That $S$ is a support for $\delta_{\bm{0}}$ follows from Lemma \ref{Lem:formulameasure}; we have already used this fact in our proof of Theorem \ref{booleleb} above. Since, according to Lemma \ref{cyclic}, $\delta_{\bm{0}}$ is cyclic, it follows that $[S]$ is the $H_{\mathrm{Vert}}$-event. Since $\Sigma(E)$ is continuous (analytic, in fact) on $[-2,2+\omega_{\mathrm{max}}]^c $ we see that $S\setminus [-2,2+\omega_{\mathrm{max}}]$ is open, and thus contained in the essential interior of $[S]$.

Since the essential frontier and essential interior are essentially disjoint (see remark following \cite[Proposition 2.2]{A-S}), to complete the proof it suffices to show that $[S\cap [-2,2+\omega_{\mathrm{max}}]]$ is contained in the essential frontier of $[S]$.  Below we show that $S^c$ is essentially dense in $[-2,2+\omega_{\mathrm{max}}]$, i.e., $|(E-t,E+t)\cap S^c| >0$ for any $E\in [-2,2+\omega_{\mathrm{max}}]$ and any $t>0$. It follows that $|(E-t,E+t)\cap S| < 2t$ for any $E\in  [-2,2+\omega_{\mathrm{max}}]$ and $t>0$, so $[S\cap [-2,2+\omega_{\mathrm{max}}]]$ is contained in the essential frontier of $[S]$.
\end{proof}

We recall that a set $T$  is \emph{essentially dense} in an interval $I\subset\mathbb{R}$ if $|J\cap T| >0$ for any interval $J\subset I$ (see \cite[\S 3]{D-S}).  
To complete the proof of Lemma \ref{lem:Hevent} it remains to show the following
\begin{lem}
Both $S$ and $S^c$ are essentially dense in $ [-2,2+\omega_{\mathrm{max}}]$.
\end{lem}
\begin{proof} Let $J \subset [-2,2+\omega_{\mathrm{max}}]$ be an interval.   We must show that $|J\cap S|>0$ and $|J\cap S^c|>0$.  Let $\mu_{0}$ denote the spectral measure of the Anderson model ${\mathcal{h}}_{\mathrm{And}}^{(0)}$ on $\ell^2(\mathbb{Z}_{\ge 0})$ associated to $\delta_0$.  Since $\delta_0$ is cyclic for ${\mathcal{h}}_{\mathrm{And}}^{(0)}$ and $\sigma({\mathcal{h}}_{\mathrm{And}}^{(0)})=[-2,2+\omega_{\mathrm{max}}]$, we conclude that $\mu_{0}(J)>0$.  
Because ${\mathcal{h}}_{\mathrm{And}}^{(0)}$ has pure point spectrum, $\mu_{0}$ is a purely singular measure.  It follows that
$$\lim_{t\to \infty}t\big{|}\{E\in J: |\langle \delta_0 , (\mathcal{h}_{\mathrm{And}}-E-i 0 )^{-1} \delta_0 \rangle |>t\} \big{|} \ = \ \frac{2}{\pi}\mu_{0}(J) \ > \ 0 \ ,
$$
by \cite[Equation (5.4)]{SPZ} (\cite[Theorem 1]{Poltora}, when restated in terms of Borel measures on the real line, would also suffice).  
In particular, choosing t sufficiently large yields
$$\big |S\cap J \big | \ = \ \left | \left \{ E \in J \ : \ \big | \langle \delta_0 , (\mathcal{h}_{\mathrm{And}}-E-i 0 )^{-1} \delta_0 \rangle \big | > \frac{1}{2\gamma} \right \} \right  | >0. $$
Thus, $S$ is essentially dense in $[-2,2+\omega_{\mathrm{max}}]$.
Similarly, letting $\mu^{(1)}_{1}$ denote the spectral measure of the Anderson model $\mathcal{h}^{(1)}_{\mathrm{And}}$ on $\ell^2(\mathbb{Z}_{\ge 1})$ associated to $\delta_1$, we have $\mu_{1}^{(1)}(J) >0$ and
$$\lim_{t\to \infty}t\big{|}\{E\in J: |\langle \delta_1 , (\mathcal{h}_{\mathrm{And}}^+-E-i 0 )^{-1} \delta_1 \rangle|>t\} \big{|} \ = \ \frac{2}{\pi}\mu_{1}^{(1)}(J) \ > \ 0 \ .
$$
By choosing $t$ sufficiently large and using Lemma \ref{lem:recursiveanderson}, we conclude by the triangle inequality that
\begin{equation*}
\big | S^c \cap J \big | \ = \ \big{|}\{E\in J:|\omega(0)-E-\langle \delta_1 , (\mathcal{h}_{\mathrm{And}}^+-E-i 0 )^{-1} \delta_1 \rangle|\ge 2\gamma\}\big{|}>0 \ ,
\end{equation*}
hence $S^c$ is essentially dense in $ [-2,2+\omega_{\mathrm{max}}]$.
\end{proof}

Theorem \ref{transrecthm} follows from the corollary below, which is a direct consequence of Theorem \ref{booleleb}, Theorem \ref{A-Sthm} and Lemma \ref{lem:Hevent}:
\begin{cor}\label{ractacsplit}
Let $P_r \ := \ P_{[-2,2+\omega_{\mathrm{max}}]} (H_{\mathrm{Vert}})$ and $P_t \ := \ P_{[-2,2+\omega_{\mathrm{max}}]^c}(H_{\mathrm{Vert}})$ denote the spectral projections of $H_{\mathrm{Vert}}$ onto $[-2,2+\omega_{\mathrm{max}}]$ and $[-2,2+\omega_{\mathrm{max}}]^c$ respectively. Then  $\mathrm{ran}P_{r} \ = \  \mathcal{H}_{\mathrm{rac}}$ and $\mathrm{ran}P_{t} \ = \ \mathcal{H}_{\mathrm{tac}} \ . $
\end{cor}

Note that the transient spectrum of $H_{\mathrm{Vert}}$, $\sigma_{\mathrm{tac}}=\overline{S\setminus (-2,2+\omega_{\mathrm{max}})}$, is non-deterministic.  Since $\Sigma(E)$ is monotone increasing on each component of the complement of $(-2,2+\omega_{\mathrm{max}})$, $\sigma_{\mathrm{tac}}=J_- \cup J_+$ where $J_-\subset [-2- 2\gamma,-2]$ and $J_+\subset[2+\omega_{\mathrm{max}},2+\omega_{\mathrm{max}}+2\gamma]$ are \emph{intervals} given by 
\begin{equation}\label{Eq:J}J_- \ = \ \left \{ E \le -2 \ : \ \abs{\Sigma(E)} \le 2\gamma \right \} \ \quad \text{and} \quad J_- \ = \ \left \{ E \ge 2+\omega_{\mathrm{max}} \ : \ \abs{\Sigma(E) }\le 2\gamma \right \} \ .
\end{equation}
Depending on the configuration $\omega$ and the hopping $\gamma$, one or both of $J_\pm$ can be empty.  Indeed, by monotonicity of $\Sigma(E)$  we have $J_\pm\neq \emptyset$ if and only if $\gamma > \frac{|\Sigma(E_\pm)|}{2}$ where $E_-=-2$ and $E_+ = 2+\omega_{\mathrm{max}}$.  For completeness we provide a proof that these numbers are non-zero and finite almost surely. Here we commit a slight abuse of notation: since $\Sigma(E)$ is only defined for Lebesgue almost-every $E\in[-2,2+\omega_{\mathrm{max}}]$, the values $\Sigma(E_\pm)$ are understood as the side limits $\Sigma(E_+)=\lim_{E\downarrow E_+} \Sigma(E)$ and $\Sigma(E_-)=\lim_{E\uparrow E_-} \Sigma(E)$ (which are well defined due to monotonicity of $\Sigma(E)$ in $[-2,2+\omega_{\mathrm{max}}]^c$).
\begin{lem}\label{lem:edge} At the edges $E_{-}=-2$ and $E_+=2 + \omega_{\mathrm{max}}$ we have that
\begin{equation}\label{edgebound}
0<|\Sigma(E_{\pm})|<\infty
\end{equation}
almost surely.
\end{lem}
\begin{proof} Assume, for the sake of contradiction, that $ \mathbb{P}\left(\Sigma(E_{\pm })=0\right) > 0$.
Then, Lemma \ref{lem:recursiveanderson}, $$
\mathbb{P}\left(\omega(0)-E_{\pm}-\langle \delta_1 , (\mathcal{h}_{\mathrm{And}}^+-E_{\pm}-i 0 )^{-1} \delta_1 \rangle=0\right)>0 \ . $$
Since $\langle \delta_1 , (\mathcal{h}_{\mathrm{And}}^+-E_\pm-i 0 )^{-1} \delta_1 \rangle $ is independent of $\omega(0)$, the above equation contradicts the fact that the distribution of $\omega(0)$ is purely absolutely continuous. We conclude that
\begin{equation*}
\mathbb{P}\left(\Sigma(E_\pm)=0\right)=0.
\end{equation*}
By a similar argument, one sees that $\mathbb{P}\left(\Sigma(E_{\pm})=\infty\right)=0.$
\end{proof}

\section{Analysis of $H_{\mathrm{Diag}}$ \textemdash \ proof of Theorem \ref{thmdiag} and Corollary \ref{cormain}}\label{sec:Hdiagproofs}
\subsection{Green's function decay}\label{Greendecaysec}
In this section we turn to the analysis of the operator $H_{\mathrm{Diag}}$ on $\ell^2(G_{\mathrm{Diag},\ell})$ for fixed $\ell\ge 0$.  Let $G_{\mathrm{Diag}}({\bm{m}},{\bm{n}};z) = \langle \delta_{\bm{m}} , (H_{\mathrm{Diag}} - z)^{-1} \delta_{{\bm{n}}} \rangle$ for  ${\bm{m}},{\bm{n}}\in \mathbb{G}_{\mathrm{Diag},\ell}$ and $z\in \mathbb{C}\setminus \mathbb{R}$ denote the corresponding Green's function.  A preliminary observation is that the fractional moments of $G_{\mathrm{Diag}}$ are bounded:
\begin{lem}\label{lem:apriori}
There is $C_{\mathrm{AP}}<\infty$ such that for each $s\in (0,1)$, $z\in \mathbb{C}\setminus \mathbb{R}$ and $\bm{m}=(m_1,m_2)$, $\bm{n}=(n_1,n_2)$ in $\mathbb{G}_{\mathrm{Diag},\ell}$ we have
\begin{equation}\label{eq:apriori}
    \mathbb{E} \left ( \left | G_{\mathrm{Diag}}(\bm{m},\bm{n};z) \right |^s \middle | \mathcal{F}_{m_1,n_1}^c \right ) \ \le \ \frac{C_{\mathrm{AP}}^s}{1-s} \ ,
\end{equation}
where $\mathbb{E}\left ( \cdot \middle | \mathcal{F}_{m_1,n_1}^c \right )$ denotes averaging with respect to the variables $\omega(m_1)$ and $\omega(n_1)$.
\end{lem}
\begin{Rems} \begin{enumerate}[label=(\roman*)]
    \item Due to the correlation between potentials at different sites, the rank-one bounds of the original Aizeman-Molchanov method \cite{A-M} do not work here. However, eq.\ \eqref{eq:apriori} is a straightforward consequence of the Hilbert-Schmidt fractional moment bounds developed for continuum Schr\"odinger operators \cite{A-E-N-S-S}. For completeness we give a brief sketch of the proof of Lemma \ref{lem:apriori} in Appendix \ref{sec:apriori}, based on results from \cite{A-E-N-S-S}.
    \item The \emph{a priori} bound $C_{\mathrm{AP}}$ depends on $s$ and the distribution of the random variables, but is independent of $\ell$, $\bm{m}$, $\bm{n}$ and $z$.
    \item The average $\mathbb{E}\left ( \cdot \middle | \mathcal{F}_{m_1,n_1}^c \right )$ is the conditional expectation with respect to the $\sigma$-algebra generated by $\{\omega(r) \ : \ r \neq m_1 \text{ and } r\neq n_1\}$, which explains the notation. 
\end{enumerate}

\end{Rems}
The key estimate we use  below to prove Theorem \ref{thmdiag} is exponential decay of the fractional moments of the Green's function:

\begin{lem}\label{Lemma:decay0n} Let $r\in (0,\frac{1}{3})$.  If $(C_{\mathrm{AP}} \gamma)^{3r}<1-3r$, then there exist constants $C_{\mathrm{Diag}}<\infty$ and $\mu_{\mathrm{Diag}}>0$ such that for all $\bm{m},\bm{n}\in \mathbb{G}_{\mathrm{Diag};\ell}$ and $z\in \mathbb{C}\setminus \mathbb{R}$ we have
\begin{equation}\label{decay0n}
\mathbb{E}\left(|G_{\mathrm{Diag}}\left(\bm{m},\bm{n};z\right)|^{r}\right) \ \leq \ C_{\mathrm{Diag}}e^{-\mu_{\mathrm{Diag}}d_{\mathrm{Diag}}(\bm{m},\bm{n})} \ ,
\end{equation}
where $d_{\mathrm{Diag}}$ denotes the graph distance in $\mathbb{G}_{\mathrm{Diag},\ell}$.
\end{lem}
\begin{Rem}
In terms of the constants  $\mu_{\mathrm{And}}=\mu_{\mathrm{And}}(3r)$, $C_{\mathrm{And}}=C_{\mathrm{And}}(3r)$ appearing in eq.\ \eqref{eq:fmdecay}, with $s=3r$, we have $\mu_{\mathrm{Diag}}=\min\bigl(\frac{1}{3}\mu_{\mathrm{And}},\frac{\alpha}{l+2}\bigr)$ where $\alpha=-\log\bigr (\frac{(C_{\mathrm{AP}} \gamma)^{r}}{(1-3r)^{\frac{1}{3}}} \bigr )$ and $C_{\mathrm{Diag}}=2\max\bigl( {C_{\mathrm{And}}}^{\frac{1}{3}}, {C_{\mathrm{And}}}^{\frac{2}{3}}\frac{C_{\mathrm{AP}}^r}{(1-3r)^{\frac{1}{3}}}e^{2\mu_{\mathrm{Diag}}(2+\ell)}\bigr)$.
\end{Rem}

Lemma \ref{Lemma:decay0n} combines two estimates: 1) localization of the $1D$ Anderson model in the bulk of $\mathbb{G}_{\mathrm{Diag},\ell}$, see eq.\ \eqref{eq:fmdecay}, and 2) decay along the boundary of $\mathbb{G}_{\mathrm{Diag},\ell}$, expressed in the following
\begin{lem}\label{initialgdecay}
Let $s\in (0,1)$ be given. For all $m,m'\in \mathbb{Z}_{\geq 0}$, $j,j'\in \{0,1,...,\ell\}$ and $z\in \mathbb{C}^{+}$ we have
\begin{equation}\mathbb{E}\left(|G_{\mathrm{Diag}}\left((m',m'(\ell+1)+j'),(m,m(\ell+1)+j);z\right)|^{s}\middle | \mathcal{F}_{[m',m]}^c \right) \ \leq \
\frac{C_{\mathrm{AP}}^s}{1-s}\left(\frac{(C_{\mathrm{AP}}\gamma)^s}{1-s}\right)^{|m-m'|}.
\end{equation}
where $\mathbb{E} \left( \cdot \middle |  \mathcal{F}_{[m',m]}^c \right) $ denotes averaging over $\omega(r)$ for each $r$ between $m'$ and $m$.
\end{lem}
\begin{Rem}As above, $\mathbb{E} \left( \cdot \middle |  \mathcal{F}_{[m',m]}^c \right) $ is a conditional expectation, in this case with respect to the $\sigma$-algebra generated by $\{\omega(k) \ : \ k> \max(m,m')\text{ or } k < \min(m,m')\}.$
\end{Rem}
\begin{proof}
Since $H_{\mathrm{Diag}}$ is a real symmetric operator, the Green's function is symmetric: $G(\bm{m},\bm{n};z) = G(\bm{n},\bm{m};z)$.  Thus it suffices to consider $m\ge m'$.
Fix $j,j'\in \{0,1,...,\ell\}$, $z\in \mathbb{C}^{+}$ and let ${\bm{x}}_{m,j}=(m,m(\ell+1)+j)$.
We proceed by induction in $m$. When $m=m'$ the statement reduces to the \emph{a priori} bound of Lemma \ref{lem:apriori}.  Suppose that the desired conclusion holds for some $m\ge m'$. By the geometric resolvent identity, we have the factorization
\begin{equation*}
G\left(\bm{x}_{m',j'},\bm{x}_{m+1,j};z\right)\ = \ \gamma G\left(\bm{x}_{m',j'},\bm{x}_{m,\ell};z\right)G^{+}\left(\bm{x}_{m+1,0},\bm{x}_{m+1,j};z\right).
\end{equation*}
Taking absolute values, raising both sides to the power $s$, and averaging, we find that
\begin{multline*}
\mathbb{E}\left(|G\left(\bm{x}_{m',j'},\bm{x}_{m+1,j};z\right)|^{s}\middle | \mathcal{F}_{[m',m+1]}^c \right)\\ = \ \gamma^s
\mathbb{E}\left(|G\left(\bm{x}_{m',j'},\bm{x}_{m,\ell};z\right)|^{s}|G^{+}\left(\bm{x}_{m+1,0},\bm{x}_{m+1,j};z\right)|^{s}\middle | \mathcal{F}_{[m',m+1]}^c \right).
\end{multline*}
Integrating first with respect to $\omega(0),...,\omega(m)$, using the inductive assumption and the fact that
$G^{+}\left(\bm{x}_{m+1,0},\bm{x}_{m+1,j};z\right)$ depends only on $\omega(k)$ for $k>m$, we obtain
\begin{multline*}
\mathbb{E}\left(|G\left(\bm{x}_{m',j'},\bm{x}_{m+1,j};z\right)|^{s}\middle | \mathcal{F}_{[m',m+1]}^c  \right) \\ \leq \
\left(\frac{\left(C_{\mathrm{AP}}\gamma^s\right)}{1-s}\right)^{m+1-m'}\mathbb{E}\left(G^{+}\left(\bm{x}_{m+1,0},\bm{x}_{m+1,j};z\right)|^{s}\middle | \mathcal{F}_{[m',m+1]}^c \right).
\end{multline*}
Finally, another application of Lemma \ref{lem:apriori} concludes the proof.
\end{proof}

We now turn to the proof of Lemma \ref{Lemma:decay0n}.  Fix $0< r < \frac{1}{3}$, $z\in \mathbb{C}\setminus \mathbb{R}$, and let $C_{\mathrm{And}}=C_{\mathrm{And}}(3r)$, $\mu_{\mathrm{And}}=\mu_{\mathrm{And}}(3r)$ be such that the Green's function decay for the Anderson model \eqref{eq:fmdecay} holds with $s=3r$.
Let $\mathcal{D}$ denote the boundary layer of $\mathbb{G}_{\mathrm{Diag}}$:  $$\mathcal{D} \ := \ \{(p,p(\ell+1)+j) \ : p\in \mathbb{Z}_{\ge 0} \text{ and } j \in \{0,\ldots,\ell\} \} \ .$$ 
Let us begin by considering the case that neither $\bm{m}$ nor $\bm{n}$ are in $\mathcal{D}$.  It follows that $\bm{m} = \bm{m}'+(j,0)$ and $\bm{n}=\bm{n}'+(k,0)$  with $\bm{m'},\bm{n}'\in \mathcal{D}$ and $j,k\ge1$.  By two applications of the geometric resolvent identity, we find that
\begin{multline*}
G_{\mathrm{Diag}}\left(\bm{m},\bm{n}\right) \ = \ I[\bm{m}'=\bm{n}']  G^{(m_1'+1)}_{\mathrm{And}}(m_1'+j,m_1'+k) \\ + \ G_{\mathrm{And}}^{(m_1'+1)}(m_1'+j,m_1'+1) 
G_{\mathrm{Diag}}\left(\bm{m}',\bm{n}'\right)
G^{(n_1'+1)}_{\mathrm{And}}(n_1'+1,n_1'+k)\ ,
\end{multline*}
where $G_{\mathrm{And}}^{(m)}(j,k) = \langle  \delta_{j}, \, (\mathcal{h}^{(m)}_{\mathrm{And}}- z)^{-1} \delta_{k}\rangle$  is the Green's function of the Anderson model $\mathcal{h}^{(m)}$ and we have surpressed the energy arguments from the Green's functions to simplify notation. 
It follows that
\begin{multline*}\mathbb{E} (| G_{\mathrm{Diag}}(\bm{m},\bm{n}) |^r)  \ \le \ I[\bm{m}'=\bm{n}']\left [ \mathbb{E}(  |G^{(m_1'+1)}_{\mathrm{And}}(m_1'+j,m_1'+k)|^{3r}) \right ]^{\frac{1}{3}} \\
\ + \ \left [ \mathbb{E}  (|G_{\mathrm{And}}^{(m_1'+1)}(m_1'+j,m_1'+1) |^{3r} ) \mathbb{E}(|G_{\mathrm{Diag}}(\bm{m}',\bm{n}') |^{3r})
\mathbb{E} (|G^{(n_1'+1)}_{\mathrm{And}}(n_1'+1,n_1'+k)|^{3r}) \right ]^{\frac{1}{3}} \, 
\end{multline*}
where we have used H\"older's inequality.  Using fractional moment decay \eqref{eq:fmdecay} and Lemma \ref{initialgdecay}, it follows that
\begin{multline}\mathbb{E} (| G_{\mathrm{Diag}}(\bm{m},\bm{n}) |^r)  \ \le \ 
I[\bm{m}'=\bm{n}'] C_{\mathrm{And}}^{\frac{1}{3}} e^{-\frac{1}{3} \mu_{\mathrm{And}}|j-k|} \\ + \ C_{\mathrm{And}}^{\frac{2}{3}} e^{-\frac{1}{3}\mu_{\mathrm{And}}(|j-1|+|k-1|)}  \frac{C_{\mathrm{AP}}^r}{(1-3r)^{\frac{1}{3}}} \left (\frac{(C_{\mathrm{AP}} \gamma)^{r}}{(1-3r)^{\frac{1}{3}}} \right )^{|m_1-n_1|}  \ , \label{eq:doubleri}\end{multline}
where  $\mu_{\mathrm{And}}=\mu_{\mathrm{And}}(3r)$, $C_{\mathrm{And}}=C_{\mathrm{And}}(3r)$.
Let $\alpha=-\log\left(\frac{(C_{\mathrm{AP}} \gamma)^{r}}{(1-3r)^{\frac{1}{3}}}\right)$, $\mu_{\mathrm{Diag}}=\min\bigl(\frac{1}{3}\mu_{\mathrm{And}},\frac{\alpha}{l+2}\bigr)$ and $C_{\mathrm{Diag}}=2\max\bigl ( C_{\mathrm{And}}^{\frac{1}{3}}, C_{\mathrm{And}}^{\frac{2}{3}}\frac{C_{\mathrm{AP}}^r}{(1-3r)^{\frac{1}{3}}}e^{2\mu_{\mathrm{Diag}}(2+\ell)}\bigr )$. In terms of the graph distance in $\mathbb{G}_{\mathrm{Diag},\ell}$, \eqref{eq:doubleri} implies
\begin{equation}\mathbb{E} (| G_{\mathrm{Diag}}(\bm{m},\bm{n}) |^r) \leq 
 C_{\mathrm{Diag}} e^{- \mu_{\mathrm{Diag}}d_{\mathrm{Diag}}(m,n)} \ , \end{equation} so \eqref{decay0n} holds.  Here we made use of the inequality
 \begin{equation}\label{graphineq}
d_{\mathrm{Diag}}(m,n)\leq |j|+|k|+|m'-n'|+\ell
      \end{equation}
      which follows from the triangle inequality and the definition of $\mathbb{G}_{\mathrm{Diag},\ell}.$

If $\bm{m}\in \mathcal{D}$ but $\bm{n}=\bm{n}'+(k,0)$ is not, then similarly we have
\begin{multline*}
    \mathbb{E} (| G_{\mathrm{Diag}}(\bm{m},\bm{n}) |^r) \ \le \ \left[\mathbb{E}(|G_{\mathrm{Diag}}(\bm{m},\bm{n}') |^{3r})
\mathbb{E} (|G^{(n_1'+1)}_{\mathrm{And}}(n_1'+1,n_1'+k)|^{3r}) \right ]^{\frac{1}{3}} \\
\le \ C_{\mathrm{And}}^{\frac{1}{3}} e^{-\frac{1}{3}\mu_{\mathrm{And}}|k-1|} \frac{C_{\mathrm{AP}}^r}{(1-3r)^{\frac{1}{3}}} \left (\frac{(C_{\mathrm{AP}} \gamma)^{r}}{(1-3r)^{\frac{1}{3}}} \right )^{|m_1-n_1|} \ .
\end{multline*}
thus \eqref{decay0n} also holds in this case, as well as when
 $\bm{m}=\bm{m}'+(j,0)$ is not in $\mathcal{D}$ but $\bm{n}$ is, by symmetry.
 Finally, If $\bm{m},\bm{n}\in \mathcal{D}$, then \eqref{decay0n} follows from Lemma \ref{initialgdecay}.
 
 \subsection{Dynamical localization for $H_{\mathrm{Diag}}$}\label{proofthmdiagsec}
 To conclude the proof of Theorem \ref{thmdiag} we will need the following bound.
 \begin{prop}\label{totalvarbound} Let  $\mu_{\bm{n},\bm{m}}$ be the spectral measure of $H_{\mathrm{Diag}}$ associated to $\bm{n}=(n_1,n_2)$ and $\bm{m}=(m_1,m_2)$ and denote its total variation norm by $\|\mu_{\bm{n},\bm{m}}\|$. Let $\mathcal{E}_{m_1}=\{({m_1},m_2)\in \mathbb{G}_{\mathrm{Diag}}\,\,:m_2\in \mathbb{Z}_{+}\}$. Then, for $r\in (0,1)$ and $J=[-2-2\gamma,2+\omega_{\mathrm{max}}+2\gamma]$ we have that
 \begin{equation}\label{totalvardecay}
 \mathbb{E}\left ( \|\mu_{\bm{n},\bm{m}}\| \right ) \ \le \ C \Biggl [  \int_{J} \sum_{\bm{k}\in \mathcal{E}_{m_1}} \mathbb{E} \bigl ( |G_{\mathrm{Diag}}(\bm{n},\bm{k};E)|^{\frac{2r}{1+r}} \bigr )dE \Biggr  ]^{\frac{1+r}{2}}, \
 \end{equation}
  with $C=\norm{\rho}_\infty \omega_{\mathrm{max}}^{\frac{2r}{1+r}}$.
 \end{prop}
 \begin{proof}
 Let $H^L$ be the restriction of $H_{\mathrm{Diag}}$ to $\{\bm{n}\in \mathbb{G}_{\mathrm{Diag}} \ : \ n_1,n_2\le L\}$. Let $\mu_{\bm{n},\bm{m}}^L$ be the spectral measure of $H^L$ associated to $\bm{n}=(n_1,n_2)$ and $\bm{m}=(m_1,m_2)$.

Fix $m_1$ and let $\widehat{H}^L = H^L + (\widehat{v}-\omega({m_1}))P_{m_1}$, where $P_{m_1}$ is the the projection onto $\ell^2(\mathcal{E}_{m_1})$, which has rank 
 $$\bar{d} \ := \ (m_1+1)(\ell+1) \ .$$ Thus $\widehat{H}^L$ is a a copy of $H^L$ with the random potential set to $\widehat{v}\in \mathbb{R}$ at each $\bm{m} \in \mathcal{E}_{m_1}$.   Then, for $\hat v\neq \omega({m_1})$, we have that $H^L\psi_E=E\psi_E$ with $P_{m_1}\psi_E\neq 0$ if and only if $E\notin \sigma\left(\widehat{H}^L\right)$ and
\begin{equation}\label{eq:BS1}\psi_E  \ = \ (\widehat{v}-\omega({m_1}))(\widehat{H}^L- E)^{-1} P_{m_1}\psi_E \ ,
\end{equation}
which implies that
\begin{equation}\label{eq:BS2}P_{m_1}\psi_E  \ = \ (\widehat{v}-\omega({m_1})) P_{m_1} (\widehat{H}^L- E)^{-1} P_{m_1}\psi_E \ .
\end{equation}
Let $\lambda_1(E)$, $\cdots$, $\lambda_{\bar{d}}(E)$ denote the eigenvalues of $P_{m_1} (\widehat{H}^L- E)^{-1} P_{m_1}  $, with corresponding normalized eigenvectors $\phi_1(E,\cdot),\ldots,\phi_{\bar{d}}(E,\cdot)$ (which we regard as functions of $E\in \mathbb{R}$ and $k\in \mathbb{G}_{\mathrm{Diag}}$). Choose branches so that these functions are continuously differentiable in $E$ away from the finite set of poles (the eigenvalues of $\widehat{H}^L$). To simplify notation, we sometimes write $\phi_j(E)$ to denote the function $k\mapsto \phi_j(E,k)$. 

From \eqref{eq:BS2}, we conclude that if $\psi_E$ is a normalized eigenvector of $H^L$ with eigenvalue $E$, then $P_{m_1}\psi_E=c\phi_j(E)$ with $c\neq 0$ and $(\widehat{v}-\omega({m_1}))\lambda_j(E)=1$ for some $j\in\{1,\cdots,\bar{d}\}$. By \eqref{eq:BS1} this implies $$\psi_E= c (\widehat{v}-\omega({m_1})) (\widehat{H}^L-E)^{-1}P_{m_1} \phi_j(E) \ .$$ The normalization constant $c$ is given by
$$\frac{1}{(\widehat{v}-\omega({m_1}))^2c^2} = \left \| (\widehat{H}^L - E)^{-1} P_{m_1} \phi_j(E)\right \|^2 
= \langle \phi_j(E), P_{m_1} (\widehat{H}^L - E)^{-2} P_{m_1}\phi_j(E)\rangle \ = \ \lambda_j'(E)  \, $$
where in the last step we made use of the Hellmann-Feynman Theorem.

It follows from the above considerations that the spectral measure $\mu^L_{\bm{n},\bm{{m}}}$ for $H^L$ associated to $\bm{n}$ and $\bm{{m}}$ is given by 
\begin{align}\nonumber
  d\mu^L_{\bm{n},\bm{m}}(E)
  \ &= \ \sum_{j=1}^{\bar{d}}\psi_E(\bm{n})\psi_E(\bm{m}) \, \delta\left ((\widehat{v}-\omega({m_1}))\lambda_j(E) -1 \right ) \, | \widehat{v}-\omega({m_1}) | \lambda_j'(E)\,dE \\
  &= \ 
  \sum_{j=1}^{\bar{d}}   g_j(E,\bm{n})g_j(E,\bm{m}) \,  \delta\left ((\widehat{v}-\omega({m_1}))\lambda_j(E) -1 \right) \,  | \widehat{v}-\omega({m_1}) | dE  \ ,  \label{eq:munm}
\end{align}
where $g_j(E,\bm{n}) =  \langle \delta_{\bm{n}}, (\widehat{H}-E)^{-1}P_{m_1} \phi_j(E) \rangle $. Observe that
\begin{equation}\label{eq:gtophi} g_j(E,\bm{m})=\lambda_j(E)\phi_j(E,\bm{m})=\frac{1}{(\widehat{v}-\omega({m_1}))}\phi_j(E,\bm{m}) \ , \end{equation} since $P_{m_1}\delta_{\bm{m}}=\delta_{\bm{m}}$. 
We note that 
$$ \sum_{j=1}^{\bar{d}}   \int_J (g_j(E,\bm{m}))^2 \,  \delta\left ((\widehat{v}-\omega({m_1}))\lambda_j(E) -1 \right) \,  | \widehat{v}-\omega({m_1}) | dE \ = \ \int_{J} d\mu_{\bm{m},\bm{m}}^L(E) \ = \ 1 , $$
since $\sigma\left(H_{\mathrm{Diag}}\right)\subset J =[-2-2\gamma,2+\omega_{\mathrm{max}}+2\gamma]$ for all realizations of the random potential.
Furthermore, we have
\begin{multline}\label{eq:l2normg} \sum_{j=1}^{\bar{d}} \int_{J} \left ( g_j(E,\bm{n}) \right )^2   \, \delta((\widehat{v}-\omega({m_1}))\lambda_j(E)  -1) \, |\widehat{v}-\omega(m_1)| dE \\ = \
\int_{\mathbb{R}} \left ( \frac{d \mu_{\bm{n},\bm{m}}^L}{d \mu_{\bm{m},\bm{m}}^L}(E) \right )^2 d \mu_{\bm{m},\bm{m}}^L(E)\ \le \ 1 \ , \end{multline}
since the integral on the right-hand side gives the norm-squared of the projection of $\delta_{\bm{n}}$ onto the cyclic subspace for $H^L$ generated by $\delta_{\bm{m}}$.

It follows from \eqref{eq:munm} and the fact that $\sigma(H^L)\subset J$ that the total variation norm of $\mu_{\bm{n},\bm{m}}^L$ is given by
\begin{equation}\label{eq:totalvar} \| \mu_{\bm{n},\bm{m}}^L\| \ = \ \sum_{j=1}^{\bar{d}} \int_{J} |g_j(E,\bm{n})| |g_j(E,\bm{m})| 
\delta((\widehat{v}-\omega({m_1}))\lambda_j(E)  +1) \,  |\widehat{v}-\omega(m_1)| dE \ . \end{equation}
Let $\widehat{\mathbb{E}}$ denote averaging over $\omega$ and an arbitrarily chosen distribution for $\widehat{v}$ (which we will take below to be the same as the distribution for $\omega(m_1)$). It follows from \eqref{eq:totalvar} and \eqref{eq:l2normg} that 
\begin{multline} \label{eq:firsttvbound} \widehat{\mathbb{E}}( \|\mu_{\bm{n},\bm{m}}^L\|) \ \le \ \Biggl [ 
\widehat{\mathbb{E}} \Biggl ( \sum_{j=1}^{\bar{d}} \int_{J}
|g_j(E,\bm{n})|^{\frac{2r}{1+r}} |g_j(E,\bm{m})|^{\frac{2}{1+r}}  \\ \times \delta((\widehat{v}-\omega({m_1}))\lambda_j(E) -1) \, |\widehat{v}-\omega(m_1)| dE \Biggr ) \Biggr ]^{\frac{1+r}{2}} \ , \end{multline}
where $0<r<1$ and we have used H\"older's inequality with exponents $p=\frac{2}{1+r}$ and $q=\frac{2}{1-r}$ (applied first to the integral over $J$ and then to the expectation).

We wish to rewrite \eqref{eq:firsttvbound} in terms of the normalized eigenfunctions $\phi_j(E)$ of $P_{m_1}(\widehat{H}^L -E)^{-1} P_{m_1}$.  Noting that
$g_j(E,\bm{n})  =  \sum_{\bm{k}\in \mathcal{E}_{m_1}} \widehat{G}(\bm{n},\bm{k};E) \phi_j(E,\bm{k}) $
and using \eqref{eq:gtophi}, we find that
\begin{multline*}
    \widehat{\mathbb{E}}\left ( \| \mu_{\bm{n},\bm{m}}^L\| \right ) \ \le \  \Biggl [ \widehat{\mathbb{E}} \Biggl (  \sum_{j=1}^{\bar{d}}  \int_{J} \sum_{\bm{k}\in \mathcal{E}_{m_1}}  |\widehat{G}(\bm{n},\bm{k};E)|^{\frac{2r}{1+r}} \frac{1}{|\widehat{v}-\omega(m_1)|^{\frac{2}{1+r}}} |\phi_j(E,\bm{k})|^{\frac{2r}{1+r}} |\phi_j(E,\bm{m})|^{\frac{2}{1+r}}   \\  
    \times  \delta((\widehat{v}-\omega({m_1}))\lambda_j(E) -1) \, |\widehat{v}-\omega(m_1)| dE \Biggr ) \Biggr  ]^{\frac{1+r}{2}} \ .
\end{multline*}
Now integrate first with respect to $\omega({m_1})$, assuming that $0\leq \hat v\leq \omega_{\max}$, using
\begin{multline}
   \int_{0}^{\omega_{\mathrm{max}}} |\widehat{v}-\omega({m_1})|^{1-\frac{2}{1+r}}\delta((\widehat{v}-\omega({m_1}))\lambda_j(E) -1)\, \rho(\omega(m_1))  d\omega({m_1})\\
   = \ \int_{0}^{\omega_{\mathrm{max}}} |\widehat{v}-\omega({m_1})|^{\frac{2r}{1+r}}\delta\bigl (\widehat{v}-\omega({m_1}) -\frac{1}{\lambda_j(E)} \bigr )\, \rho(\omega({m_1})) d\omega({m_1}) \ \leq \ \norm{\rho}_\infty \,  \omega_{\mathrm{max}}^{\frac{2r}{1+r}} \ ,
   \end{multline}
   where we have used that $\delta(ax)=\frac{1}{|a|}\delta(x)$ and that $\lambda_j(E)(\widehat{v}-\omega(m_1))=1$.
Thus, letting $C=\norm{\rho}_\infty \omega_{\mathrm{max}}^{\frac{2r}{1+r}}$, we have
\begin{equation} \widehat{\mathbb{E}}\left ( \| \mu_{\bm{n},\bm{m}}^L\| \right ) \ \le C \Biggl [  \int_{J} \sum_{\bm{k}\in \mathcal{E}_{m_1}} \widehat{\mathbb{E}} \Biggl  ( |\widehat{G}(\bm{n},\bm{k};E)|^{\frac{2r}{1+r}} \sum_{j=1}^{\bar{d}} |\phi_j(E,\bm{k})|^{\frac{2r}{1+r}} |\phi_j(E,\bm{m})|^{\frac{2}{1+r}} \Biggr )\Biggr ]^{\frac{1+r}{2}} \ . \label{eq:secondtv}
\end{equation}

Since $\{\phi_j(E)\}_{j=1}^{\bar{d}}$ is an orthonormal basis for $\ell^2(\mathcal{E}_{m_1})$, we have, by H\"older's inequality,
\begin{align*} \sum_{j=1}^{\bar{d}} |\phi_j(E,\bm{k})|^{\frac{2r}{1+r}} |\phi_j(E,\bm{m})|^{\frac{2}{1+r}} \ &\le \ \left ( \sum_{j=1}^{\bar{d}} |\phi_j(E,\bm{k})|^{2} \right )^{\frac{r}{1+r}} \left ( \sum_{j=1}^{\bar{d}}  |\phi_j(E,\bm{m})|^{2} \right )^{\frac{1}{1+r}}\\
&\ = \ 1 \ .\end{align*}
Therefore, it follows from \eqref{eq:secondtv} that
$$  \widehat{\mathbb{E}}\left ( \| \mu_{\bm{n},\bm{m}}^L\| \right ) \ \le \ C \Biggl [  \int_{J} \sum_{\bm{k}\in \mathcal{E}_{m_1}} \widehat{\mathbb{E}} \bigl ( |\widehat{G}(\bm{n},\bm{k};E)|^{\frac{2r}{1+r}} \bigr )\Biggr ]^{\frac{1+r}{2}} \ . $$
Choosing the distribution of $\widehat{v}$ to be identical to that of $\omega({m_1})$, and independent from $\omega$, we find that
\begin{equation} \widehat{\mathbb{E}}\left ( \| \mu_{\bm{n},\bm{m}}^L\| \right ) \ \le \ C \Biggl [  \int_{J} \sum_{\bm{k}\in \mathcal{E}_{m_1}} \mathbb{E} \bigl ( |G(\bm{n},\bm{k};E)|^{\frac{2r}{1+r}} \bigr )\Biggr ]^{\frac{1+r}{2}} \, , \label{finvolvar} 
\end{equation}
where $G$ denotes the Green's function of $H^L$.
Since $ \mu_{\bm{n},\bm{m}}^L$ converges in the vague topology to $\mu_{\bm{n},\bm{m}}$, \eqref{totalvardecay} follows from \eqref{finvolvar} and Fatou's Lemma. \qedhere
\end{proof}
Theorem \ref{thmdiag} immediately follows from Proposition \ref{totalvarbound} and Lemma \ref{Lemma:decay0n} since $\mu_{\bm{n},\bm{m}}$ is a regular Borel measure and thus for any Borel set $F\subset \mathbb{R}$ we have that $$|\mu_{\bm{n},\bm{m}}|(F)=\sup_{|f|\leq 1}\Big|\int_{F} f(x)\,d\mu_{\bm{n},\bm{m}}(x)\Big|,$$ with the supremum taken over Borel measurable functions $f$ bounded by one.

\appendix
\section{A Version of Boole's Equality for level sets of Herglotz functions: Proof of proposition \ref{booletype}}\label{sec:Booleproof}
In this appendix we prove Proposition \ref{booletype}, recalled here for the reader's convenience:  
\begin{quote}
    \textbf{Proposition 7} \it Let $\mu$ be a finite Borel measure which is purely singular and let $F(z)=\int \frac{1}{u-z}\,d \mu(u) $ be its Borel transform. Then
 \begin{equation}\tag{\ref{Boole's}}
 \big|\{E\in \mathbb{R}\,\,:\alpha< E+F(E+i0)<\beta\}\big|=\beta-\alpha.
 \end{equation}
\end{quote}

It is instructive to consider first the situation when $\mu$ is a pure point measure with finitely many atoms, in which case $F$ is a rational function of the form $F(E)=\sum^{N}_{n=1}\frac{p_n}{u_n-E}$ with real poles $\{u_n\}_{n=1}^{N}$ at the atoms of $\mu$ and weights $\{p_n\}_{n=1}^{N}$ with $p_n=\mu({u_n})$. For example, the diagonal elements of the Green's function in finite volume are of this form. For a real number $\lambda$, let $Q_{\lambda}$ be a polynomial of degree $N+1$
given by
\begin{equation*}Q_{\lambda}(E)=(\lambda-E-F(E))\prod^{N}_{n=1}(E-u_n).\end{equation*}
The solutions $v_1(\lambda),...,v_{N+1}(\lambda)$ of the equation $E+F(E)=\lambda$ coincide with the roots of $Q_{\lambda}$. Therefore, the coefficient of
 $E^N$  in 
 $Q_{\lambda}(E)=-\prod^{N+1}_{n=1}(E-v_n(\lambda))$ equals $\sum^{N+1}_{n=1}v_n(\lambda)$. On the other hand, by definition of $Q_{\lambda}$, this coefficient
 is $\lambda+\sum^{N+1}_{n=1}u_n$.  Therefore, 
\begin{equation}\label{firsteqroots}\sum^{N+1}_{n=1}v_n(\lambda)=\lambda +\sum^{N}_{n=1}u_n.
\end{equation}
Since $E+F(E)$ is monotone increasing between poles, the set $\{E\in \mathbb{R} : \alpha < E+F(E) < \beta\}$ is a disjoint union of intervals $\cup^{N+1}_{n=1}(v_n(\alpha),v_n(\beta))$. Therefore, we
conclude from equation (\ref{firsteqroots}) that
$$
\big|\{E\in \mathbb{R}\ : \ \alpha \ < \ E+F(E)\ <\  \beta\}\big| \ =\ \sum^{N+1}_{n=1} (v_n(\beta)-v_n(\alpha))
\ = \ \beta-\alpha \ .$$

The above proof is not readily generalized to other types of measures.  The following argument is inspired by the analysis in \cite[Proposition 8.2]{A-W-B} and provides a proof which is valid for general singular measures.

\begin{proof}[Proof of Proposition \ref{booletype}]
The function $F(z)$ is a Herglotz function, i.e., a holomorphic map from the upper half plane to itself.  It follows from the classical theory of such functions (see \cite[Theorem 5.9.1]{Simon-Book}) that for almost every $E \in \mathbb{R}$,
\begin{enumerate}
    \item the boundary value $F(E+i0)=\lim_{\epsilon \rightarrow 0} F(E+i\epsilon)$ exists, and
    \item $F(E+i0)$ is real (because $\mu$ is singular).
\end{enumerate}
Furthermore, for any $\lambda\in \mathbb{R}$ the level set $\{E \ : \ E+F(E+i0) = \lambda \}$ is a Lebesgue null set.  To see this note that $G(z)=(\lambda - z - F(z))^{-1}$ is a Herglotz function, so its boundary value $G(E+i0)$ exists for almost every $E\in \mathbb{R}$. Thus $F(E+i0)+E\neq \lambda$ for almost every $E$.

It follows from the above considerations that the indicator function of the set $\{E:\alpha<E+F(E+i0)<\beta\}$ can be represented, for almost every $E$, as
\begin{equation}\label{eq:1=imlog} \mathds{1}[E \ :\ \alpha\ <\ E+F(E+i0)\ <\ \beta ] \ = \ \lim_{\epsilon \downarrow 0} \frac{1}{\pi}u_{\alpha,\beta}(E+i\epsilon) \ ,\end{equation} where
$u_{\alpha,\beta}(z) = \mathrm{Im}\log (z + F(z)-\beta)-\mathrm{Im}\log(z+F(z)-\alpha)$ and $\log$ denotes the principal branch of the logarithm.  The function $u_{\alpha,\beta}(z)$ is harmonic and bounded by $\pi$ for $z$ in the upper half plane.  By dominated convergence and the fact that a harmonic function is reproduced by the Poisson integral of its boundary values over a half plane (see \cite[Theorem 5.9.2]{Simon-Book}), we have
\begin{align*}|\{E\in \mathbb{R} \ : \ \alpha < E+F(E+i0)<\beta\}| \ &= \ \lim_{\epsilon \to 0} \frac{1}{\pi}\int_{\mathbb{R}} u_{\alpha,\beta}(E+i\epsilon) d E \\
&= \ \lim_{\epsilon \to 0} \lim_{\eta\rightarrow \infty} \frac{1}{\pi}\int_{\mathbb{R}} \frac{\eta^2}{E^2+\eta^2} u_{\alpha,\beta}(E+i\epsilon) dE \\ 
&= \ \lim_{\epsilon \to 0} \lim_{\eta \rightarrow \infty} \eta u_{\alpha,\beta}(i(\eta + \epsilon)) \\ 
&= \ \lim_{\eta \rightarrow \infty} \eta u_{\alpha,\beta}(i\eta) \ ,
\end{align*}
where the equality follows since $\limsup_{\epsilon \rightarrow 0} \limsup_{\eta \rightarrow \infty} \epsilon |u_{\alpha,\beta}(i(\eta+\epsilon))|=0$, because $u_{\alpha,\beta}$ is bounded. On the other hand, by definition of $u_{\alpha,\beta}$ we know that
\begin{equation}u_{\alpha,\beta}(i\eta) \ = \ \mathrm{Im}\int^{\beta}_{\alpha}\frac{1}{E-i\eta -F(i\eta)}\, dE
\end{equation}
Hence, by dominated convergence again,
$$
\lim_{\eta \to \infty}\eta u_{\alpha,\beta}(i\eta) \
= \ \lim_{\eta \to \infty}\int^{\beta}_{\alpha}\frac{\eta^2+\eta\mathrm{Im}F(i\eta)}{(E-\mathrm{Re}F(i\eta))^2+(\eta
+\mathrm{Im}F(i\eta))^2}\,dE \ = \ \beta-\alpha
$$
where we have used the simple facts that $\lim_{\eta\to \infty} F(i\eta)=0$ and $\lim_{\eta\to
\infty}\eta\mathrm{Im}F(i\eta)=\mu(\mathbb{R}).$
\end{proof}

\section{Localization in the horizontal direction}\label{sec:appendixB}
In this section we sketch the proof of Lemma \ref{horloc}, which recall here for the reader's convenience:
\begin{taglem}{\ref{horloc}}
Given $s\in (0,1)$ there exist positive constants $C_{\mathrm{And}}(s)$ and $\mu_{\mathrm{And}}=\mu_{\mathrm{And}}(s)$ such that, for all $m,n \in \mathbb{Z}_{\geq 0}$,
\begin{equation}\tag{\ref{dynlocand}}
\mathbb{E}\left(\sup_{|f|\leq 1}|\langle \delta_m,f\left(\mathcal{h}_{p}\right)\delta_n\rangle|\right)\ \leq \ 
(2\omega_{\max}+2\gamma)^{\frac{s}{2-s}}
\left(\|\rho\|_{\infty}\left(4+4\gamma+\omega_{\max}\right)C_{\mathrm{And}}(s)\right)^{\frac{1}{2-s}}\, e^{-\frac{\mu_{\mathrm{And}}}{2-s}|m-n|} \ ,
\end{equation}
with the supremum taken over all Borel measurable functions bounded by one.
\end{taglem}

\begin{proof}
 Recall that, according to \eqref{defhp} $\mathcal{h}_{p}= \mathcal{h}_{\mathrm{And}}^{(0)} - 2\gamma \cos p P_{0}$ with $P_0$ the projection onto $\delta_0$.  We follow closely the proof of \cite[Theorem A1]{A-S-F-H}. It suffices to show that, for every $L\in \mathbb{N}$, (\ref{dynlocand}) holds with $\mathcal{h}_{p}$ replaced by its restriction to
$\ell^{2}\left(\mathbb{Z}_{+}\cap[0,L]\right)$, denoted henceforth by $\mathcal{h}^{L}_{p}$. Note that $\mathcal{h}^L_p\rightarrow \mathcal{h}_p$ in the strong resolvent sense as $L\rightarrow \infty$.
Let $\hat\omega(0)\in \mathbb{R}$ be arbitrary and $$\hat{\mathcal{h}}_{\mathrm{And}}^{(0)}=\mathcal{h}_{\mathrm{And}}^{(0)}+
\left(\hat \omega(0)-\omega(0)+2 \gamma \cos p\right)P_0$$ be a copy of $\mathcal{h}_{\mathrm{And}}^{(0)}$ with the random potential at zero set to $\hat\omega(0)$. Write $v_p=\omega(0)-2 \gamma \cos p$. From rank-one perturbation formulas (see, for instance, \cite[Theorem 5.3]{A-W-B} or
\cite[Equation(A.7)]{A-S-F-H}), the spectral measure of $\mathcal{h}^{L}_{p}$ is given by
\begin{equation}\label{rankone}
d\mu^{p,L}_{\bm{m},\bm{n}}(E)=\left(\hat\omega(0)-v_p\right) {\hat G}^{L}_{\mathrm{And}}\left(m,n;E\right)\delta(v_p-\hat\omega(0)-\Sigma^L(E))\,dE
\end{equation}
where
${\hat G}^{L}_{\mathrm{And}}\left(m,n;E\right)$ is the the Green's function of the Anderson model $\hat{\mathcal{h}}_{\mathrm{And}}^{(0)}$ restricted to $\ell^{2}\left(\mathbb{Z}_{+}\cap[0,L]\right)$ and
\begin{equation}
\Sigma^L(E):=-\frac{1}{{\hat G}^{L}_{\mathrm{And}}\left(0,0;E\right)}.
\end{equation}
Equation \eqref{rankone} implies a couple of estimates. The first one is obtained letting $m=n$ in (\ref{rankone}) to achieve
\begin{equation}
d\mu^{p,L}_{m,m}(E)=\delta(v_p-\hat\omega(0)-\Sigma^L(E))\,dE.
\end{equation}
In particular
\begin{equation}\label{probweight}
\int^{\infty}_{-\infty}\delta(v_p-\hat\omega(0)-\Sigma^L(E))\,dE=1.
\end{equation} A second observation is that 
\begin{equation}\label{elltwo}
|\hat\omega(0)-v_p|^2  \int \left|{\hat G}^{L}_{\mathrm{And}}\left( m,n;E\right)\right|^2\delta(v_p-\hat\omega(0)-\Sigma^L(E))\,dE\leq 1.
\end{equation}

Indeed, as explained in \cite[Equation(A.9)]{A-S-F-H}, $d\mu^{p,L}_{m,n}(E)=\psi(E)d\mu^{p,L}_{m,m}(E)$ with $$\int |\psi(E)|^2\,d\mu^{p,L}_{m,m}(E)=\langle \delta_n,P_{\delta_m}\delta_n\rangle\leq 1,$$ where $P_{\delta_m}$ is the projection onto the cyclic subspace of $\mathcal{h}^{L}_{p}$ which contains $\delta_m$. 

Combining equations (\ref{rankone}), (\ref{elltwo}) and (\ref{probweight}) with H\"older's inequality (applied to the exponents $(p,q)=(2-s,\frac{2-s}{1-s})$) and Jensen's inequality for expectations, we conclude
that for all intervals $I \subset \mathbb{R}$
\begin{equation}\mathbb{E}\left(\left|\mu^{p,L}_{m,n}\right|(I) \right)\leq \left[
\mathbb{E}\left(|\hat\omega(0)-v_p|^s\int_{I}\left|{\hat G}^{L}_{\mathrm{And}}\left(m,n;E \right)\right|^s\delta(v_p-\hat\omega(0)-\Sigma^L(E))\,dE \right)\right]^{\frac{1}{2-s}}.
\end{equation}
Thus,
$$\mathbb{E}\left(\left|\mu^{p,L}_{m,n}\right|(I) \right)\leq
\left(2\omega_{\max}+2\gamma\right)^{\frac{s}{2-s}}\left(\int_{I}
\mathbb{E}\left(\left|{\hat G}^{L}_{\mathrm{And}}\left(m,n;E\right)\right|^s\right)\delta(v_p-\hat\omega(0)-\Sigma^L(E))\,dE \right)^{\frac{1}{2-s}}.$$ 
Recalling that $v_p=\omega(0)-2 \gamma \cos p$, integrating first over $\omega(0)$ and choosing $\hat\omega(0)$ to be a random variable independent of $\omega(0)$ but identically distributed with it we conclude that

\begin{equation}\label{hordecayfin}
\mathbb{E}\left(\left|\mu^{p,L}_{m,n}\right|(I) \right)\leq
\left(2\omega_{\max}+2\gamma\right)^{\frac{s}{2-s}}\|\rho\|^{\frac{1}{2-s}}_{\infty}\left(\int_{I}
\mathbb{E}\left(\left|G^{L}_{\mathrm{And}}\left(m,n;E\right)\right|^s\right)\,dE \right)^{\frac{1}{2-s}}.
\end{equation}
Since the operator $\mathcal{h}_{p}$ has spectrum contained in $[-2-2\gamma, 2+2\gamma+\omega_{\mathrm{max}}]$, the inequality
(\ref{hordecayfin}) together with \eqref{eq:fmdecay} suffices to conclude the proof of lemma \ref{horloc}.
We mention that by introducing an integrable weight, one could also handle the case where the random potentials are unbounded. For further details we refer to
\cite[Equations (A.13)-(A.18)]{A-S-F-H}.\qedhere
\end{proof}

\section{\emph{A priori} bounds on the Green's function}\label{sec:apriori}
Let $\mathcal{H}$ and $\mathcal{H}_1$ be separable Hilbert spaces and let $A:D(A)\subset \mathcal{H}\rightarrow \mathcal{H}$ be a maximally dissipative operator. Recall that a densely defined operator $A$ is said to be dissipative if $\mathrm{Im}\langle \varphi, A \varphi \rangle \geq 0$ for every $\varphi \in D(A)$. $A$ is said to be maximally dissipative when it is dissipative and has no proper dissipative extension.
Let $M_1:\mathcal{H}\rightarrow \mathcal{H}_1$ and $M_2 :\mathcal{H}_1\rightarrow \mathcal{H}$ be  Hilbert-Schmidt operators.
 Denoting by $|\,.\,|$ Lebesgue measure and by $\|\,\cdot\,\|_{HS}$ the Hilbert-Schmidt norm, the following weak $L_1$ bounds hold

\begin{lem}{\cite[Lemma 3.1]{A-E-N-S-S}}
\begin{equation}\label{weakl1}\Big|\{v \,\,:\|M_1\frac{1}{A-v+i0}M_2\|_{HS}>t\}\Big|\leq C_W \|M_1\|_{HS}\|M_2\|_{HS}\frac{1}{t}
\end{equation}
where the constant $C_{W}$ is independent of $A$,$M_1$ and $M_2$.
\end{lem}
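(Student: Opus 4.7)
The plan is to realize the HS-valued function $v \mapsto M_1 (A - v + i0)^{-1} M_2$ as the boundary value of the Cauchy transform of a Hilbert--Schmidt-valued finite Borel measure on $\mathbb{R}$, and then apply the Hilbert-valued Kolmogorov weak-$L^1$ inequality for Cauchy transforms of vector-valued measures. The scalar prototype of this inequality is precisely Proposition~\ref{boole}/\ref{booletype}; the Hilbert-valued extension, with universal constant, is classical and stems from the UMD property of Hilbert spaces applied to the Hilbert transform.

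The first step is to reduce to self-adjoint $A$. For a maximally dissipative operator, the Sz.-Nagy--Foias dilation theorem provides a self-adjoint $\tilde A$ on an enlargement $\tilde{\mathcal H} \supset \mathcal H$ with $(A - z)^{-1} = P (\tilde A - z)^{-1} \iota$ on the relevant half-plane, where $\iota : \mathcal H \hookrightarrow \tilde{\mathcal H}$ is the isometric inclusion and $P : \tilde{\mathcal H} \to \mathcal H$ the orthogonal projection. Setting $\tilde M_1 := M_1 P$ and $\tilde M_2 := \iota M_2$ preserves Hilbert--Schmidt norms, so it suffices to establish the lemma in the self-adjoint case.

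With the spectral resolution $A = \int \lambda\, dE(\lambda)$ in hand, define the Hilbert--Schmidt-valued Borel measure $\mu(B) := M_1 E(B) M_2$. For any finite Borel partition $\{B_i\}$ of $\mathbb{R}$, submultiplicativity $\|XY\|_{HS} \leq \|X\|_{HS} \|Y\|_{HS}$ applied to $X = M_1 E(B_i)$ and $Y = E(B_i) M_2$, followed by Cauchy--Schwarz in $i$ together with the trace identity $\sum_i \mathrm{Tr}(E(B_i) M_j^* M_j) = \|M_j\|_{HS}^2$, yields the total variation bound
\begin{equation*}
\|\mu\|_{TV}(\mathbb{R}) \le \|M_1\|_{HS} \|M_2\|_{HS}.
\end{equation*}
Functional calculus then produces the identity $M_1 (A - v + i0)^{-1} M_2 = \lim_{\varepsilon \downarrow 0} \int (\lambda - v + i\varepsilon)^{-1} \, d\mu(\lambda)$ in the Hilbert--Schmidt norm for Lebesgue-a.e.\ $v$. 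Applying the Hilbert-valued Kolmogorov inequality to this vector-valued Cauchy transform delivers $\big|\{v : \|F(v + i0)\|_{HS} > t\}\big| \le C t^{-1} \|\mu\|_{TV}(\mathbb{R})$, which together with the total-variation bound above yields \eqref{weakl1}.

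The technical heart of the argument will be the Hilbert-valued weak-$L^1$ inequality itself. One cannot obtain it by invoking the scalar version matrix-element by matrix-element and summing, since $\ell^2$-combinations of weak-$L^1$ functions are not in general weak-$L^1$. The cleanest resolution is to quote boundedness of the Hilbert transform $L^1(\mathbb{R}; H) \to L^{1,\infty}(\mathbb{R}; H)$ for a separable Hilbert space $H$ (here $H$ is the Hilbert--Schmidt ideal), which is a standard consequence of Calder\'on--Zygmund theory for UMD-valued kernels. Alternatively one may reproduce the Poisson-averaging calculation at the end of the proof of Proposition~\ref{booletype} directly at the level of Hilbert-valued measures, taking care that the Hilbert norm enters the final estimate only once.
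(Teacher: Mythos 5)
The paper does not supply a proof of this lemma; it is stated with a citation to \cite[Lemma 3.1]{A-E-N-S-S}, so there is no internal argument to compare against. Your proposal is nevertheless correct and parallels the structure of the cited proof. The dilation reduction to the self-adjoint case is sound and HS-norm-preserving (since $P=\iota^*$ gives $\|M_1P\|_{HS}=\|M_1\|_{HS}$ and $\|\iota M_2\|_{HS}=\|M_2\|_{HS}$, and the compression identity for the resolvent holds throughout the lower half-plane by analytic continuation from the power identity for the Cayley transform). The total-variation bound $\|\mu\|_{TV}(\mathbb{R}) \leq \|M_1\|_{HS}\|M_2\|_{HS}$ via submultiplicativity, Cauchy--Schwarz in the partition index, and the trace identity is a correct and genuinely self-contained calculation, and it is the key quantitative input. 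You also correctly isolate the point at which a naive argument would fail: summing scalar weak-$L^1$ bounds over matrix elements does not preserve weak-$L^1$. The Hilbert-space-valued Kolmogorov inequality you then quote is a legitimate black box; \cite{A-E-N-S-S} similarly rest on a Hilbert--Schmidt-valued weak-type bound attributed to Naboko, so you are in good company. The one over-optimistic remark is the parenthetical suggestion that the Poisson-averaging calculation at the end of the proof of Proposition~\ref{booletype} could be reproduced at the Hilbert-valued level: that argument hinges on the harmonicity of the specific scalar function $\phi_{\alpha,\beta}(E+F(E+i0))$ and does not transfer to norms of operator-valued boundary values without genuinely new ideas, so the Calder\'on--Zygmund/UMD route is the one that actually closes the argument.
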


\begin{lem}\cite[Proposition 3.2]{A-E-N-S-S}
Let $A$,$M_1$ and $M_2$ be as above and let $U_1,U_2$ be nonnegative operators.
\begin{equation}\label{weakrank2}\Big|\{(v_1,v_2)\in [0,1]^2 \,\,:\|M_1U^{1/2}_{1}\frac{1}{A-v+i0}U^{1/2}_{2}M_2\|_{HS}>t\}\Big|\leq 2C_W
\|M_1\|_{HS}\|M_2\|_{HS}\frac{1}{t}
\end{equation}
\end{lem}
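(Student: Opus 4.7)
The natural approach is to bootstrap from the rank-one weak $L^1$ bound in the previous Lemma by handling the two parameters $v_1, v_2$ sequentially via Fubini, interpreting the $v$ in the resolvent as $v_1 U_1 + v_2 U_2$ (the natural rank-two analogue of the scalar-shift setup). First I would fix $v_2 \in [0,1]$ and exploit that $A_{v_2} := A - v_2 U_2$ remains maximally dissipative, since $v_2 U_2$ is self-adjoint and non-negative and hence does not alter $\operatorname{Im}\langle \varphi, A \varphi\rangle$. The $v_1$-dependence is then encoded through a Birman-Schwinger / Krein identity of the form
\[
U_1^{1/2}\left(A_{v_2} - v_1 U_1 + i0\right)^{-1} U_1^{1/2} = B(v_2;E)\left(1 - v_1 B(v_2;E)\right)^{-1},
\]
with $B(v_2;E) := U_1^{1/2}(A_{v_2} + i0)^{-1} U_1^{1/2}$ acting on $\mathcal{K}_1 := \overline{\operatorname{Ran}(U_1^{1/2})}$. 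In this reduced picture $v_1$ effectively enters as a scalar shift of the maximally dissipative operator $-B(v_2;E)^{-1}$, so the rank-one bound becomes directly applicable to the sandwiched resolvent.

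Applying the rank-one Lemma fiberwise at each $v_2$ yields an estimate of the form
\[
\bigl|\{v_1 \in [0,1] : \|M_1 U_1^{1/2} R(v_1,v_2) U_2^{1/2} M_2\|_{HS} > t\}\bigr| \leq \tfrac{C_W \|M_1\|_{HS}\, \|U_2^{1/2} M_2\|_{HS}}{t}.
\]
A naive Fubini integration over $v_2 \in [0,1]$ would produce the undesirable factor $\|U_2^{1/2}M_2\|_{HS}$ in place of $\|M_2\|_{HS}$. To obtain the clean bound $\|M_1\|_{HS}\|M_2\|_{HS}/t$ stated in the Lemma, I would instead apply the rank-one argument symmetrically in both variables: decompose the bad set as
\[
\mathcal{B}_t := \{(v_1,v_2) : F(v_1,v_2) > t\} \subseteq \mathcal{B}_t^{(1)} \cup \mathcal{B}_t^{(2)},
\]
where $\mathcal{B}_t^{(1)}$ is controlled by first integrating out $v_1$ (using the Krein reduction for $U_1$) and $\mathcal{B}_t^{(2)}$ by first integrating out $v_2$ (using the symmetric Krein reduction for $U_2$). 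Each of the two pieces is bounded by $C_W \|M_1\|_{HS}\|M_2\|_{HS}/t$, and summing gives the factor of $2$ appearing in the statement.

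The main technical obstacle lies in making the Birman-Schwinger identity rigorous when $U_1$ and $U_2$ are not invertible, so that $B(v_2;E)$ is only densely defined on $\mathcal{K}_1$, and in justifying the boundary values as $\varepsilon \to 0^+$. The standard fix is to regularize $U_i \to U_i + \delta I$, carry out the argument for each $\delta > 0$ where the invertibility holds, and pass to the limit $\delta \to 0^+$. The weak-$L^1$ estimates are uniform in $\delta$ because the $C_W$ in the rank-one Lemma is independent of $A$, $M_1$, $M_2$; bounded convergence applied to the Lebesgue measure of the bad sets then transfers the estimate to the unregularized operators. Once this technicality is settled, the two-step symmetric reduction described above produces the claimed weak-$L^1$ bound.
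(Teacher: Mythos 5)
Your fiberwise strategy has a genuine gap at the central step. To apply the weak-$L^1$ bound of the preceding Lemma to the $v_1$-dependence (for $v_2$ fixed), you need to rewrite the resolvent perturbed by $v_1 U_1$ as a \emph{scalar} shift of a maximally dissipative operator via the Birman--Schwinger identity. That identity reads
\[
U_1^{1/2}\bigl(A_{v_2}-v_1 U_1+i0\bigr)^{-1}U_1^{1/2}=\bigl(B_{v_2}^{-1}-v_1\bigr)^{-1},\qquad B_{v_2}=U_1^{1/2}A_{v_2}^{-1}U_1^{1/2},
\]
and therefore requires the weight $U_1^{1/2}$ on \emph{both} sides of the resolvent. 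In the object at hand the left weight is $U_1^{1/2}$ but the right weight is $U_2^{1/2}$; there is no bounded way to move the mismatched $U_2^{1/2}$ across, since that would amount to dividing by $U_1^{1/2}$ (generically unbounded). So the reduction you invoke simply does not produce an expression of the form $\tilde M_1(\tilde A - v_1+i0)^{-1}\tilde M_2$ with $\tilde M_i$ Hilbert--Schmidt, and the fiberwise estimate with $\|U_2^{1/2}M_2\|_{HS}$ on the right is unjustified. The symmetric splitting $\mathcal{B}_t\subseteq\mathcal{B}_t^{(1)}\cup\mathcal{B}_t^{(2)}$ does not repair this: whatever set you integrate out first, the other weight remains on the wrong side, and in any case each half would still carry an extra factor $\|U_i^{1/2}\|$, which is not bounded by $1$ under the stated hypotheses (the $U_i$ are only assumed nonnegative, not contractions).

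The missing idea, which is what makes the two-parameter estimate work in \cite{A-E-N-S-S} (and in the appendix of \cite{M-S} to which the present paper defers), is a change of variables that puts the \emph{same} weight on both sides of the resolvent. Set $u=v_1+v_2$, $w=v_1-v_2$, so that $v_1U_1+v_2U_2=uU+\tfrac{w}{2}(U_1-U_2)$ with $U:=\tfrac12(U_1+U_2)$. For fixed $w$ the operator $A_w:=A-\tfrac{w}{2}(U_1-U_2)$ is still maximally dissipative (a bounded self-adjoint perturbation), and since $U_i\le 2U$ there are operators $K_1,K_2$ with $\|K_i\|\le\sqrt2$ such that $U_1^{1/2}=K_1U^{1/2}$ and $U_2^{1/2}=U^{1/2}K_2^{*}$. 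Then
\[
M_1U_1^{1/2}\bigl(A_w-uU+i0\bigr)^{-1}U_2^{1/2}M_2=(M_1K_1)\,U^{1/2}\bigl(A_w-uU+i0\bigr)^{-1}U^{1/2}(K_2^{*}M_2),
\]
which is exactly in the form where the Birman--Schwinger reduction in $u$ and the one-variable weak-$L^1$ bound apply, giving a bound $\le 2C_W\|M_1\|_{HS}\|M_2\|_{HS}/t$ uniformly in $w$ (the factor $2=\sqrt2\cdot\sqrt2$ comes from absorbing $K_1,K_2$ into $M_1,M_2$). Integrating over $w\in[-1,1]$ and accounting for the Jacobian $dv_1\,dv_2=\tfrac12\,du\,dw$ yields the stated two-dimensional bound with constant $2C_W$. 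Your regularization remark (replacing $U_i$ by $U_i+\delta$ and passing to the limit) is a reasonable way to handle the non-invertibility of the $U_i$, but it should be applied to this change-of-variables argument, not to the fiberwise one.
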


The bounds (\ref{weakl1}), (\ref{weakrank2}) easily imply the \emph{apriori} bound of Lemma \eqref{eq:apriori}
\begin{equation}\mathbb{E} \left ( \left | G_{\mathrm{Diag}}(\bm{m},\bm{n};z) \right |^s \middle | \mathcal{F}_{m_1,n_1}^c \right ) \ \le \ \frac{C_{\mathrm{AP}}^s}{1-s} \ ,
\end{equation}
where we recall that $\mathbb{E}\left ( \cdot \middle | \mathcal{F}_{m_1,n_1}^c \right )$ denotes averaging with respect to the variables $\omega(m_1)$ and $\omega(n_1)$. For further details refer the reader to \cite[appendix A]{M-S}.

\vspace{0.5cm}

\textbf{Acknowledgments.} R. Matos is thankful to Wencai Liu and Shiwen Zhang for useful discussions. This work was supported by the National Science Foundation under grants no 1900015 and 2000345.

\bibliographystyle{amsplain}

 \end{document}